\documentclass[10pt]{llncs}
\emergencystretch=3em
\sloppy

\def\nottoobig#1{{\hbox{$\left#1\vcenter to1.111\ht\strutbox{}\right.\n@space$}}}

\if01

\newtheorem{theorem}{Theorem}[section]

\newtheorem{lemma}[theorem]{Lemma}

\newtheorem{proposition}[theorem]{Proposition}

\newtheorem{definition}[theorem]{Definition}

\fi
\usepackage{epsfig}
\usepackage{amsmath}
\usepackage{amsfonts}

\newcommand{\calA}{{\cal A}}

\newcommand{\nat}{{\mathbb N}}

\newcommand{\poly}{{\rm poly}}

\def\nottoobig#1{{\hbox{$\left#1\vcenter
to1.111\ht\strutbox{}\right.\n@space$}}}


\newcommand{\prob}{{\rm Prob}}





\newcommand{\ie}{$\mbox{i.e.}$}




\newlength{\filength}
\settowidth{\filength}{\mbox{\bf f{}i}}
\newsavebox{\gcbox}
\sbox{\gcbox}{\framebox[\filength]{\rule{0ex}{2ex}}}




\newcommand{\qedblob}{\mbox{\rule[-1.5pt]{5pt}{10.5pt}}}
\def\literalqed{{\ \nolinebreak\hfill\mbox{\qedblob\quad}}}

\def\qed{\literalqed}









\newcommand{\singlespacing}{\let\CS=
\@currsize\renewcommand{\baselinestretch}{1}\tiny\CS}
\newcommand{\singlespacingplus}{\let\CS=
\@currsize\renewcommand{\baselinestretch}{1.25}\tiny\CS}
\newcommand{\doublespacing}{\let\CS=
\@currsize\renewcommand{\baselinestretch}{1.75}\tiny\CS}
\newcommand{\draftspacing}{\let\CS=
\@currsize\renewcommand{\baselinestretch}{2.0}\tiny\CS}


\def\zo{\{0,1\}}


\def\mapping{\rightarrow}





\def\calA{{\cal A}}






\newcommand{\zon}{\zo^n}




\makeatletter
\def\@listI{\leftmargin\leftmargini \parsep 4.5pt plus 1pt minus 1pt\topsep6pt plus 2pt minus 2pt \itemsep  2pt plus 2pt minus 1pt}

\let\@listi\@listI
\@listi
\makeatother

\author{ {Marius Zimand\/}
\thanks{  Department of Computer and Information Sciences, Towson University,
Baltimore, MD.; email: mzimand@towson.edu; http://triton.towson.edu/\~{ }mzimand.
The author is supported in part
by NSF grant CCF 0634830.}}

\author{
{Marius Zimand}
\thanks{ {  \tt  http://triton.towson.edu/\~{ }mzimand}.}}
\institute{
{Department of Computer and Information Sciences, Towson University,
Baltimore, MD, USA}
}
\date{ }


\title{Impossibility of independence amplification in Kolmogorov complexity theory}

\begin{document}

\pagestyle{plain}
\maketitle

\begin{abstract}
The paper studies randomness extraction from sources with bounded independence and the issue of independence amplification of sources, using the framework of Kolmogorov complexity. The dependency of strings $x$ and $y$ is ${\rm dep}(x,y) = \max\{C(x)  - C(x \mid y), C(y) - C( y\mid x)\}$, where $C(\cdot)$ denotes the Kolmogorov complexity. It is shown that there exists a computable Kolmogorov extractor $f$ such that, for any two $n$-bit strings with complexity $s(n)$ and dependency $\alpha(n)$, it  outputs a string of length $s(n)$ with complexity $s(n)- \alpha(n)$ conditioned by any one of the input strings. It is proven that the above are the optimal parameters a Kolmogorov extractor can achieve. It is shown that independence amplification cannot be effectively realized. Specifically, if (after excluding a trivial case) there exist computable functions $f_1$ and $f_2$  such that ${\rm dep}(f_1(x,y), f_2(x,y)) \leq \beta(n)$ for all $n$-bit strings $x$ and $y$  with ${\rm dep}(x,y) \leq \alpha(n)$, then $\beta(n) \geq \alpha(n) -  O(\log n)$.
\end{abstract}

{\bf Keywords:} Kolmogorov complexity, random strings, independent strings, randomness extraction.
\smallskip
\section{Introduction}
 Randomness extraction is an algorithmical process that improves the quality of a source  of randomness. A source of randomness can be modeled as a finite probability distribution, or a finite binary string, or an infinite binary sequence and the randomness quality is measured, respectively, by min-entropy, Kolmogorov complexity, and constructive Hausdorff dimension. All the three settings have been studied (the first one quite extensively).
 
 It is desirable to have an extractor that can handle very general classes of sources. Ideally, we would like to have an extractor that obtains random bits from a single defective source under the single assumption that there exists a certain amount of randomness in the source.  Unfortunately, this is not possible. In the case of finite distributions, impossibility results for extraction from a single source have been established by Santha and Vazirani~\cite{san-vaz:j:quasirand} and Chor and Goldreich~\cite{cho-gol:j:weaksource}. In the case of finite binary strings and Kolmogorov complexity randomness, Vereshchagin and Vyugin~\cite{ver-vyu:j:kolm} show that there exists strings $x$ with relatively high Kolmogorov complexity so that any string shorter than $x$ by a certain amount and which has small Kolmogorov complexity conditioned by $x$ (in particular any such shorter string effectively constructed from $x$) has small Kolmogorov complexity unconditionally. The issue of extraction from one infinite sequence has been first raised by Reimann and Terwijn~\cite{rei:t:thesis}, and after a series of partial results~\cite{rei:t:thesis,nie-rei:c:wtt-Kolm-increase,bie-dot-ste:j:haussdimension}, Miller~\cite{mill:j:KolmExtract} has given a strong negative answer, by constructing a sequence $x$ with ${\rm dim}(x) = 1/2$ such that, for any Turing reduction $f$, ${\rm dim}(f(x)) \leq 1/2$ (or $f(x)$ does not exist; ${\rm dim}(x)$ is the constructive Hausdorff dimension of the sequence $x$).
 
 Therefore, for extraction from a general class of sources, one has to consider the case of $t \geq 2$ sources, and in this situation, positive results are possible. Computable extractors from $t=2$ distributions with min-entropy $k = O(\log n)$ are constructed in~\cite{cho-gol:j:weaksource,dod-oli:c:extractor}. The construction of polynomial-time multisource extractors is a difficult problem. Currently, for $t=2$, the best results are by Bourgain~\cite{bou:j:multiextract} who achieves $k = (1/2 - \alpha)n$ for a small constant $\alpha$, and Raz~\cite{raz:c:multiextract} who achieves $k = {\rm polylog} n$ for one distribution and $k = (1/2 + \alpha)n$ for the other one. Polynomial-time extractors for $3$ or more distributions with lower values of $k$ for all distributions are constructed in~\cite{bar-imp-wig:c:multisourceext,bkssw:c:multisourceextract,raz:c:multiextract,rao-zuc:c:threesources,rao:c:multiextract}. Dodis et al.~\cite{dod-elb-oli-raz:c:twosourceextract} construct a polynomial-time $2$-source extractor for $k > n/2$, where the extracted bits are random conditioned by one of the sources. Kolmogorov extractors for $t \geq 2$ sources also exist. Fortnow et al.~\cite{fhpvw:c:extractKol} actually observe that any randomness extractor for distributions is a Kolmogorov extractor and Hitchcock et al.~\cite{hit-pav-vin:t:Kolmextraction} show  that a weaker converse
  holds, in the sense that any Kolmogorov extractor is a randomness condenser with very good parameters (``almost extractor''). For $t=2$, the works~\cite{zim:c:kolmlimindep,zim:c:genindepstringsCiE09} construct computable Kolmogorov extractors with better properties than those achievable by converting the randomness extractors from~\cite{cho-gol:j:weaksource} and ~\cite{dod-oli:c:extractor}. The case of infinite sequences is studied in \cite{zim:c:csr}, which shows that it is possible to effectively increase the constructive dimension if the input consists of two sources.
 
 All the positive results cited above require that the sources are independent. At a first glance, without independence, even the distinction between one and two (or more) sources is not clear. However, independence can be quantified and then we can consider two sources having bounded independence. It then becomes important to determine to what extent randomness extraction is possible from sources with a limited degree of independence and whether the degree of independence can be amplified. 
 
 We address these questions for the case of finite strings and Kolmogorov complexity-based randomness. The level of dependency of two strings is based on the notion of mutual information. The information that string $x$ has about string $y$ is $I(x:y) = C(y) - C(y \mid x)$, where $C(y)$ is the Kolmogorov complexity of $y$ and $C(y \mid x)$ is the Kolmogorov complexity of $y$ conditioned by $x$. By the symmetry of information theorem, $I(x:y) \approx I(y:x) \approx C(x)+C(y) - C(xy)$.\footnote{We use $\approx$, $\preceq$ and $\succeq$ for equalities and inequalities that hold within an additive error bounded by $O(\log n)$.} We define the dependency of strings $x$ and $y$ as ${\rm dep}(x,y) = \max\{I(x:y), I(y:x)\}$. Let $S_{k, \alpha}$ be the set of all pairs of strings $(x,y)$ such that $C(x) \geq k$, $C(y) \geq k$ and ${\rm dep}(x,y) \leq \alpha$. A Kolmogorov extractor for the class of sources $S_{k,\alpha}$ is a function $f: \zon \times \zon \mapping \zo^m$ such that for all $(x,y) \in S_{k, \alpha}$, $C(f(x,y))$ is ``close'' to $m$. In other words, if we define the randomness deficiency of a string $z$ as $|z| - C(z)$, we would like that the randomness deficiency of $f(x,y)$ is small. 
 Our first result shows that the randomness deficiency of $f(x,y)$ cannot be smaller than essentially the dependency of $x$ and $y$. 
 \smallskip
 
 \emph{Result 1} (informal statement; see full statement in Theorem~\ref{t:twodependentsources}). There exists no computable function $f$ with the property that, for all $(x,y) \in S_{k, \alpha}$, the randomness deficiency of $f(x,y)$ is less than $\alpha - \log n - O(\log \alpha)$. This holds true even for high values of $k$ such as $k \succeq n - \alpha$. The only condition is that $m \geq \alpha$ ($m$ is the length of the ouput of $f$).
 \smallskip
 
We observe that the similar result holds for the case of finite distributions. Let $S_{k,\alpha}$ be the set of all random variables over $\zon$ that have min-entropy at least $k$ and dependency at most $\alpha$. (The min-entropy of $X$ is $H_{\infty}(X) = \min_{a \in \zon, X(a) >0} \log (1/\prob[X=a])$ and the dependency of $X$ and $Y$ is $H_\infty(X) + H_\infty(Y) - H_{\infty}(X,Y)$.) Then, for every $\alpha$ and $m \geq \alpha$ and for every function $f: \zon \times \zon \mapping \zo^m$ (even non-computable), there exists $(X,Y) \in S_{k, \alpha}$ with dependency at most $\alpha$ and min-entropy of $f(X,Y)$ at most $m - \alpha$.

Our next result (and the main technical contribution of this paper) is a positive one. Keeping in mind \emph{Result 1}, the best one can hope for is a Kolmogorov extractor that from any strings $x$ and $y$ having dependency at most $\alpha$ obtains a string $z$ whose randomness deficiency is $\approx \alpha$. We show that this is possible in a strong sense.
\smallskip

\emph{Result 2} (informal statement; see full statement in Theorem~\ref{t:extractor}). For every $k > \alpha$, there exists a computable function $f:\zon \times \zon \mapping \zo^m$, where $m \approx k$, and such that for every $(x,y) \in S_{k, \alpha}$, $C(f(x,y) \mid x) = m - \alpha - O(1)$ and $C(f(x,y) \mid y) = m - \alpha - O(1)$.
\smallskip

Thus, optimal Kolmogorov extraction from sources with bounded independence can be achieved effectively and in a strong form. Namely, the randomness deficiency of the extracted string $z$ is minimal (\ie, within an additive constant of $\alpha$) even conditioned by any one of the input strings and furthermore the length of $z$ is maximal.
In~\cite{zim:c:kolmlimindep} a similar but weaker theorem has been established. The difference is that in~\cite{zim:c:kolmlimindep} the length of the output is only $\approx k/2$ and $k$ has to be at least $2 \alpha$.
The proof method of \emph{Result 2} extends the one used in~\cite{zim:c:kolmlimindep} in a non-trivial way (the  novel technical ideas are described in Section~\ref{s:proofoverview}).
We note that the Kolmogorov extractor that can be obtained from the randomness extractor from~\cite{dod-oli:c:extractor} using  the technique in~\cite{fhpvw:c:extractKol} would have weaker parameters (more precisely, the output length would be $m \approx k - 2 \alpha$).
 
The dependency of two strings $x$ and $y$ is another measure of the non-randomness in $(x,y)$ considered as a joint source. Similarly to Kolmogorov extractors that reduce randomness deficiency, it would be desirable to have an algorithm that reduces dependency (equivalently, amplifies independence). The main result of the paper shows that effective independence amplification is essentially impossible.
We say that two functions $f_1, f_2 : \zon \times \zon \mapping \zo^{l(n)}$ amplify independence from level $\alpha(n)$ to level $\beta(n)$ (for $\beta(n) < \alpha(n)$) if ${\rm dep}(f_1(x,y), f_2(x,y)) \leq \beta(n)$ whenever ${\rm dep}(x,y) \leq \alpha(n)$. Note that this is trivial to achieve if $f_1(x,y)$ or $f_2(x,y)$ have Kolmogorov complexity at most $\beta(n)$. Therefore, we also request that $f_1(x,y)$ and $f_2(x,y)$ have Kolmogorov complexity at least $\beta(n) + c \log n$, for some constant $c$. However, as a consequence of \emph{Result 1} and \emph{Result 2}, this is impossible for any reasonable choice of parameters.
\smallskip

\emph{Result 3} (informal statement; see full statement in Theorem~\ref{t:impossamplific}). Let $f_1$ and $f_2$ be computable functions such that for all $(x,y) \in S_{k,\alpha}$,
${\rm dep}(f_1(x,y), f_2(x,y)) \leq \beta(n)$ (and $C(f_1(x,y)) \succeq \beta(n), C(f_2(x,y)) \succeq \beta(n)$). Then $\beta(n) \succeq \alpha(n)$. This holds true for any $\alpha(n) \preceq n/2$ and any $k \preceq n - \alpha(n)$.
\smallskip
 
{\it Discussion of some technical aspects.} As it is typically the case in probabilistic analysis, handling sources with bounded independence is difficult. In this discussion, an $(n,k)$ source is a random variable over $\zon$ with min-entropy $k$. Chor and Goldreich~\cite{cho-gol:j:weaksource} show that a random function starting from any two independent sources of type $(n,k)$  extracts $\approx k/3$ bits that are close to random. Dodis and Oliveira~\cite{dod-oli:c:extractor} using a more refined probabilistic analysis (based on a martingale construction) show the existence of an extractor 
that from two independent sources $X$ and $Y$ of type $(n,k_1)$ and respectively $(n, k_2)$ obtains $\approx k_1$ bits that are close to random even conditioned by $Y$. Both constructions use in an essential way the independence of the two input distributions. The independence property allows one to reduce the analysis to the simpler case in which the two input distributions are so called \emph{flat distributions}. A flat distribution  with min-entropy $k$ assigns equal  probability mass  to a subset of size $2^k$ of $\zon$ and probability zero to the elements outside this set. Extractors that extract from flat distributions admit a nice combinatorial description. Namely, an extractor $E : \zon \times \zon \mapping \zo^m$ for two flat distributions $X, Y$ with min-entropy $k$ corresponds to an $N$-by-$N$ table (where $N = 2^n$) whose cells are colored with $M$ colors (where $M = 2^m$) that satisfy the following balancing property: For any set of colors $A \subseteq [M]$ and for any $K$-by-$K$ subrectangle of the table (where $K = 2^k$), the number of $A$-colored cells is close to $|A|/M$. Such tables can be obtained with the probabilistic method.

If the two input distributions are not independent, then the reduction to flat distributions is not known to be possible and the above approach fails. This is why almost all of the currently known randomness extractors (whether running in polynomial time, or merely computable) assume that the weak sources are perfectly independent (one exception is the paper~\cite{tre-vad:c:psamplextractor}).

In this light, it is surprising that Kolmogorov extractors for input strings that are not fully independent (actually with arbitrarily large level of dependency) can be obtained via balanced tables, as we do in this paper. This approach succeds because the Kolmogorov complexity-based analysis views the level of independence of sources as just another parameter and there is no need for any additional machinery to handle sources that are not fully independent. We believe (based on some partial results) that Kolmogorov complexity is a useful tool not only for analyzing  Kolmogorov extractors but also for circumventing some of the technical difficulties  in the investigation of multi-source extractors for sources with bounded independence.

 \section{Preliminaries}

We work over the binary alphabet $\{0,1\}$; $\nat$ is the set of natural numbers. A string $x$ is an element of $\{0,1\}^*$; $|x|$ denotes its length; $\zo^n$ denotes the set of strings of length $n$; $|A|$ denotes the cardinality of a finite set $A$; for $n \in \nat$, $[n]$ denotes the set $\{1,2, \ldots, n\}$. We recall the basics of (plain) Kolmogorov complexity (for an extensive coverage, the reader should consult one of the monographs by Calude~\cite{cal:b:infandrand}, Li and Vit\'{a}nyi~\cite{li-vit:b:kolmbook}, or Downey and Hirschfeldt~\cite{dow-hir:b:algrandom}; for a good and concise introduction, see Shen's lecture notes~\cite{she:t:kolmnotes}). Let $M$ be a standard Turing machine. For any string $x$, define the \emph{(plain) Kolmogorov complexity} of $x$ with respect to $M$, as 
$C_M(x) = \min \{ |p| \mid M(p) = x \}$.
 There is a universal Turing machine $U$ such that for every machine $M$ there is a constant $c$ such that for all $x$,
$C_U(x) \leq C_M(x) + c$.
We fix such a universal machine $U$ and dropping the subscript, we let $C(x)$ denote the Kolmogorov complexity of $x$ with respect to $U$. We also use the  concept of conditional Kolmogorov complexity. Here the underlying machine is a Turing machine that in addition to the read/work tape which in the initial state contains the input $p$, has a second tape containing initially a string $y$, which is called the conditioning information. Given such a machine $M$, we define the Kolmogorov complexity of $x$ conditioned by $y$ with respect to $M$ as 
$C_M(x \mid y) = \min \{ |p| \mid M(p, y) = x \}$.
There exist  universal machines of this type and they satisfy the relation similar to the above, but for conditional complexity. We fix such a universal machine $U$, and dropping the subscript $U$, we let $C(x \mid y)$ denote the Kolmogorov complexity of $x$ conditioned by $y$ with respect to $U$. 

There exists a constant $c_U$ such that for all strings $x$, $C(x) \leq |x| + c_U$. Strings $x_1, x_2, \ldots, x_k$ can be encoded in a self-delimiting way (\ie, an encoding from which each string can be retrieved) using $|x_1| + |x_2| + \ldots + |x_k| + 2 \log |x_2| + \ldots + 2 \log |x_k| + O(k)$ bits. For example, $x_1$ and $x_2$ can be encoded as $\overline{(bin (|x_2|)} 01 x_1 x_2$, where $bin(n)$ is the binary encoding of the natural number $n$ and, for a string $u = u_1 \ldots u_m$, $\overline{u}$ is the string $u_1 u_1 \ldots u_m u_m$ (\ie, the string $u$ with its bits doubled).

For every sufficiently large $n$ and $k \leq n$, for every $n$-bit string $y$,
$2^{k-2\log n} < |\{x \in \zo^n \mid C(x \mid y) \leq k\}| < 2^{k+1}$.

The Symmetry of Information Theorem~\cite{zvo-lev:j:kol} states that for any two strings $x$ and $y$, 
\begin{itemize}
\item[(a)] $C(xy) \leq C(y) + C(x \mid y) + 2 \log C(y) +O(1)$.
\item[(b)] $C(xy) \geq C(x) + C(y \mid x) - 2 \log C(xy) - 4 \log \log C(xy) - O(1)$.
\item[(c)] If $|x| = |y| = n$, $C(y) - C(y\mid x) \geq C(x) - C(x \mid y) - 5 \log n$
\end{itemize}

For integers $m \leq n$, let $b(n,m) = {n \choose 0} + {n \choose 1} +  \ldots + {n \choose m}$.
Note that $m (\log n - \log m) < \log b(n,m) < m (\log n - \log m) + m \log e + \log (1+m)$ (since $(n/m)^m < {n \choose m} < (en/m)^m$). 

All the Kolmogorov extractors will be ensembles of functions $f = (f_n)_{n \in \nat}$, of type $f_n: (\zon)^t \mapping \zo^{m(n)}$. The parameter $t$ is constant and indicates the number of sources (in this paper we only consider $t=1$ and $t=2$). For readability, we usually drop the subscript and the expression ``function $f : \zo^n \mapping \zo^m$ ...'' is a substitute for ``ensemble $f = (f_n)_{n \in \nat}$, where $f_n :\zon \mapping \zo^{m(n)}$,  ...''

We say that an ensemble of functions $f = (f_n)$ is computable with advice $k(n)$, if for every $n$ there exists a string $p$ of length at most $k(n)$ such that $U(p,1^n)$ outputs the table of the function $f_n$. 

We use the following standard version of the Chernoff bounds. Let $X_1, \ldots, X_n$ be independent random variables that take the values $0$ and $1$, let $X= \sum X_i$ and let $\mu$ be the expected value of $X$. Then, for any $ 0 < d \leq 1$, $\prob[X > (1+d)\mu] \leq e^{-d^2 \mu/3}$.

\subsection{Limited Independence}
\label{s:indep}

\begin{definition} 
\label{d:indep}
\begin{itemize}
\item[(a)] The dependency of two strings $x$ and $y$ is ${\rm dep}(x,y) = \max\{C(x) - C(x \mid y), C(y) - C(y \mid x)\}$.
\item[(b)] Let $d: \nat \mapping \nat$. We say that strings $x$ and $y$ have dependency at most $d(n)$ if ${\rm dep}(x,y) \leq d(\max(|x|, |y|))$.
\end{itemize}
\end{definition}
The Symmetry of Information Theorem implies that
\[
|{\rm dep}(x,y) - (C(x) - C(x \mid y))| \leq O(\log(C(x)) + \log (C(y))).
\]
If the strings $x$ and $y$ have length $n$, then
\[
|{\rm dep}(x,y) - (C(x) - C(x \mid y))| \leq 5 \log n.
\]

\section{Limits on Kolmogorov complexity extraction}
\subsection{Limits on extraction from one source}
We first show that for any single-source function computable with small advice there exists an input with high Kolmogorov complexity whose image has low Kolmogorov complexity.
\begin{proposition}
\label{p:onesourceimposs}
Let $f: \zo^n \mapping \zo^m$ be a function computable with advice $k(n)$. There exists $x \in \zo^n$ with $C(x) \geq n-m$ and $C(f(x)) \leq k(n) + \log n + 2 \log \log n + O(1)$.
\end{proposition}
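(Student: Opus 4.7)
The plan is to combine a pigeonhole argument on the fibers of $f_n$ with the standard counting lower bound for Kolmogorov complexity.

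First I would fix an arbitrary short advice string $p$ of length at most $k(n)$ with $U(p,1^n)$ equal to the table of $f_n$. Since $f_n : \zo^n \to \zo^m$, by pigeonhole there must exist some output $z \in \zo^m$ whose preimage under $f_n$ has size at least $2^{n-m}$. Among all such ``heavy'' outputs, I would single out a canonical one, e.g., the lexicographically smallest $z^*$ with $|f_n^{-1}(z^*)| \geq 2^{n-m}$. The point of picking a canonical heavy output is that once we have $p$ and $n$, we can compute the whole table of $f_n$ and therefore identify $z^*$ without any additional information.

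Next I would bound $C(z^*)$. A program that outputs $z^*$ needs only to carry a self-delimiting encoding of $n$ (using $\log n + 2 \log\log n + O(1)$ bits, as recalled in the Preliminaries) together with the advice string $p$ of length $\leq k(n)$; then it simulates $U$ to recover $f_n$ and scans the table for $z^*$. This gives $C(z^*) \leq k(n) + \log n + 2 \log \log n + O(1)$, matching the stated bound.

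Finally, I would locate the input $x$. By the choice of $z^*$ the preimage $f_n^{-1}(z^*)$ has cardinality at least $2^{n-m}$, while the number of strings in $\zo^n$ with $C(\cdot) < n-m$ is at most $2^{n-m}-1$. Consequently at least one $x \in f_n^{-1}(z^*)$ satisfies $C(x) \geq n-m$, and by construction $f(x) = z^*$, which has the required small complexity.

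There is really no hard step here; the only thing to be careful about is the encoding overhead in the description of $z^*$, i.e., that one pays only $\log n + 2 \log\log n + O(1)$ to prefix the advice $p$ with a self-delimiting description of $n$ so that the decoder knows both how many advice bits to read and with what input length to invoke $U$. Everything else is pigeonhole.
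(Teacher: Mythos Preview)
Your proof is correct and follows essentially the same approach as the paper: pick a canonical output with at least $2^{n-m}$ preimages, bound its complexity via the advice plus a self-delimiting encoding of $n$, then use counting to find a complex preimage. The only cosmetic difference is that the paper takes the \emph{most popular} output (ties broken lexicographically) rather than the lexicographically smallest output with fiber size $\geq 2^{n-m}$; both canonical choices are computable from the table of $f_n$ and yield the same bound.
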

\begin{proof}
Let $z$ be the most popular element in the image of $f$ (\ie, the element in $\zo^m$ with the largest number of preimages under $f$; if there is a tie, take $z$ to be the smallest lexicographically). Since $z$ can be described by the table of $f$ and $O(1)$ bits, it follows that $C(z) \leq k(n) + \log n  + 2 \log \log n + O(1)$. There are at least $2^{n-m}$ elements of $\zo^n$ mapping to $z$. Thus, there must be a string $x$ of complexity at least $n-m$ mapping to $z$.~\qed
\end{proof}

The following result is, in a sense, a strengthening of the previous proposition. It shows that there exists a string with relatively high Kolmogorov complexity, so that all functions computable with a given amount of advice fail to extract its randomness. We provide two incomparable combinations of parameters. Part~(b) is essentially a result of Vereshchagin and Vyugin~\cite{ver-vyu:j:kolm}.
\begin{theorem}
\label{t:nonextractablex}
For every $k$, every $n$, any computable function $m$:

(a) There exists a string $x \in \zo^n$ such that for every function $f: \zo^n \mapping \zo^{m}$ that is computable with advice $k = k(n)$,
\begin{itemize}
	\item[(1)] $C(x) > n- \log b(M,K) \geq n - K(m-k + O(1))$, where $M= 2^m, K = 2^{k+1}-1$, and
	\item[(2)] $C(f(x)) < 2k + 2 \log k + \log n + 2 \log \log n + O(1)$ or $f(x)$ is not defined.
\end{itemize}

and

(b) There exists a string $x \in \zo^n$ such that for every function $f: \zon \mapping \zo^m$ that is computable with advice $k$,
\begin{itemize}
	\item[(1)] $C(x) > n- K \log (M+1) \approx n - Km$, where $M= 2^m, K = 2^{k+1}-1$, and
	\item[(2)] $C(f(x)) < k + \log n + 2 \log \log n + O(1)$ or $f(x)$ is not defined.
\end{itemize}
\end{theorem}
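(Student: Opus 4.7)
The plan is to attach to each $x \in \zon$ a pigeonhole-friendly ``signature'' based on its images under all advice-$k$ computable functions, and then to pick an $x$ of high complexity inside a popular pigeonhole class. Fix $n$ and $k$, enumerate the $K = 2^{k+1}-1$ advice strings $a \in \zo^{\le k}$, and let $\mathcal{F} \subseteq \zo^{\le k}$ be those $a$ for which $U(a, 1^n)$ halts with a valid table of a function $f_a : \zon \to \zo^m$; so $|\mathcal{F}| \le K$. For part (a), the signature is the \emph{set} $T(x) = \{f_a(x) : a \in \mathcal{F}\} \subseteq \zo^m$, of size at most $K$, which takes at most $b(M, K)$ distinct values. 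By pigeonhole some $T^*$ satisfies $|\{x : T(x) = T^*\}| \ge 2^n / b(M, K)$, and since fewer than $2^{n - \log b(M, K)}$ strings have $C(x) < n - \log b(M, K)$, the preimage contains some $x$ satisfying (a)(1). For any advice-$k$ function $f$ we then have $f(x) \in T(x) = T^*$, so $f(x)$ is the $j$-th element of $T^*$ in lex order for some $j \le K$, giving $C(f(x)) \le C(T^*) + \log K + O(1)$.

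The technical heart of part (a) is bounding $C(T^*) \le k + \log n + O(\log k)$. Although $T^*$ depends on the uncomputable halting pattern of the $K$ advice strings, the pattern is effectively recoverable once one is given just the \emph{count} $|\mathcal{F}| \in [0, K]$: dovetail $U(a, 1^n)$ for all $a \in \zo^{\le k}$ in parallel and stop once exactly $|\mathcal{F}|$ of them have halted with valid tables. Given the tables, $T(x)$ is computable for every $x$, and one picks the lex-smallest most popular value as $T^*$. A self-delimiting encoding of $(n, k, |\mathcal{F}|)$ costs $\log n + \log k + k + O(\log \log n + \log \log k)$ bits, so $C(f(x)) \le 2k + \log n + O(\log k)$, matching (a)(2).

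Part (b) follows the same template with the signature taken to be the \emph{ordered} tuple $v(x) = (f_a(x))_{a \in \zo^{\le k}}$ (with a fresh symbol $\perp$ in positions where $a \notin \mathcal{F}$), which takes at most $(M+1)^K$ values. Pigeonhole produces an $x$ with $C(x) > n - K \log(M+1)$, giving (b)(1). The tighter bound $C(f(x)) < k + \log n + O(\log \log n)$ in (b)(2) is essentially the result of Vereshchagin and Vyugin~\cite{ver-vyu:j:kolm}, where the extra $k$ bits present in part (a)'s signature-based description are shaved off by a refined analysis treating each $f$ individually, in the spirit of Proposition~\ref{p:onesourceimposs}.

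The main obstacle is the dovetailing-by-count trick in the bound on $C(T^*)$: a naive description of the halting pattern of the $K$ advice strings would cost $K$ bits, which is far too many, whereas knowing only the count $|\mathcal{F}|$ (at most $k+1$ bits) suffices to effectively reconstruct the pattern. This is what lets $T^*$---and hence $f(x)$---be described in $O(k + \log n + \log k)$ bits despite its dependence on otherwise uncomputable data.
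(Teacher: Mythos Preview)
Your treatment of part~(a) is correct and follows the same pigeonhole-on-range-sets template as the paper. The one difference is the choice of the $\approx k$-bit oracle hint: the paper uses $s$, the maximum size of a \emph{frequent} range set, whereas you use $|\mathcal{F}|$, the number of advice strings on which $U(\cdot,1^n)$ halts with a valid table. Your choice is arguably cleaner, since once $|\mathcal{F}|$ is known the entire family $\{f_a\}_{a\in\mathcal{F}}$ becomes computable by dovetailing, and hence so does $T(x)$ for every $x$; the paper's enumeration of ``frequent sets of size $s$'' is less transparent. Either hint costs at most $k+1$ bits and yields the same $2k+\log n+O(\log k+\log\log n)$ bound.

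Part~(b), however, has a real gap. Your tuple pigeonhole produces an $x$ with $C(x)>n-K\log(M+1)$, but for \emph{that} $x$ your own template only gives $C(f(x))\le C(v^*)+k+O(\log k)\le 2k+\log n+\cdots$, which is~(a)(2), not~(b)(2). You then invoke Vereshchagin--Vyugin for the sharper bound, but that result constructs a \emph{different} $x$, so you have not exhibited a single string satisfying both~(b)(1) and~(b)(2). Moreover, the phrase ``treating each $f$ individually, in the spirit of Proposition~\ref{p:onesourceimposs}'' mischaracterizes the missing argument: Proposition~\ref{p:onesourceimposs} yields a separate $x$ for each $f$, which is exactly what must be avoided here.

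What actually saves the extra $k$ bits in~(b) is a greedy construction (the paper reproduces it): working only from $n$ and $k$---with \emph{no} additional oracle hint---one dovetails and successively picks $z_1,z_2,\ldots$ where each $z_i$ is a length-$m$ output lying in at least a $1/(M+1)$ fraction of the ranges already known to contain $z_1,\ldots,z_{i-1}$. The process stops after some $s\le K$ steps, and one argues that $\{z_1,\ldots,z_s\}$ equals ${\rm Range}(x)$ for at least $2^n/(M+1)^K$ strings $x$, one of which then satisfies~(b)(1). Because each $z_i$ is determined by $n$, $k$, and the index $i\le K$ alone, $C(z_i)\le k+\log n+O(\log\log n)$, giving~(b)(2) for the \emph{same} $x$. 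This collective greedy selection---not a per-$f$ argument---is the idea your sketch is missing.
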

\begin{proof}
Let $f_i$, $i \in \{1, \ldots, K\}$ be the function computed by $U(p_i, 1^n)$, where $p_i$ is the $i$-th string in $\{0,1\}^{\leq k}$. We fix $n$ and let $m = m(n)$.

For each $x \in \zo^n$, consider the computations $f_1(x), f_2(x), \ldots,f_K(x)$. Some of them may not halt, and some of them may produce strings of length different from $m$. Let ${\rm Range}(x)$ be the set of strings of length $m$ that result from these computations.

We first prove (a). ${\rm Range}(x)$ has one of $b (M, K)$ possible values. It follows that there exists one set that is equal to ${\rm Range}(x)$ for at least 
$2^{n}/b(M,K)$ many strings $x \in \zon$. We say that such a set is \emph{frequent}. Consider all frequent sets and let $s$ be the maximum size of a frequent set, taken over all frequent sets. If we know $s$, we can enumerate all frequent sets of size $s$. Let $\{z_1, \ldots, z_s\}$,  be the first such set that appears in the enumeration. Note that each entry $z_i$  can be described by $s$, $n$, $k$, and $i \leq s$. We can represent $i$ by a string having length exactly $k + 1$ bits and this string will therefore also describe $k$. It follows that each such $z_i$ satisfies
\[
\begin{array}{ll}
C(z_i) & \leq k + \log n + \log s  + 2 (\log \log n + \log \log s) + O(1) \\
& \leq 2k + \log n + 2 \log \log n + 2 \log k  + O(1),
\end{array}
\]
where we have used the fact that $i \leq K$ and $s \leq K$.
The set $\{z_1, \ldots, z_k \}$ is equal to at least $2^{n}/b(M,K)$  Ranges. So there exists $x$ with $C(x) \geq n - \log b(M, K)$ such that ${\rm Range}(x) = (z_1, \ldots, z_s)$. This $x$ satisfies the requierements in the statement.~\qed
\smallskip

We now prove (b) (following~\cite{ver-vyu:j:kolm}).
The goal,  as before, is to produce a set that is equal to ${\rm Range}(x)$, for many $x \in \zo^n$. We can do this, avoiding the information $s$ used in the previous proof, by the following greedy algorithm. By dovetailing the computations $f_i(x)$, for
all $x \in \zo^n$ and $i \in [K]$, we start enumerating strings produced by these computations, of which we retain only those having length $m$. Let $T = 2^m + 1$. We start the enumeration till we find a string $z_1$ that appears in at least $2^n/T$ ranges. There may be no such $z_1$ and  we handle this situation later. We mark with ($1$) all Ranges that have been identified to contain $z_1$. In the second iteration, we restart the enumeration till we find a string $z_2 \not = z_1$ that belongs to at least $1/T$ fraction of Ranges marked with ($1$). We re-mark these Ranges with ($2$). In general, at iteration $i$, we find a string
$z_i$, different from $z_1, \ldots, z_{i-1}$, that belongs to at least a fraction $1/T$ of Ranges marked $(i-1)$. If we find such a $z_i$, we mark the Ranges that have been discovered to contain it with $(i)$.

We keep on doing this process till either (a) we have completed $K$ iterations and have obtained $K$ distinct strings $z_1, \ldots, z_K$ in $\zo^m$, or (b) at iteration $i$, the enumeration failed to produce $z_i$.

In case (a), the set $\{z_1, \ldots, z_K\}$ is equal to at least $2^n/T^K$ Ranges.

In case (b), the set $\{z_1, \ldots, z_{i-1}\}$ is a subset of at least $2^n/T^{i-1}$ Ranges, and for each other string $z \in \zo^m$, the set 
$\{z_1, \ldots, z_{i-1}, z\}$ is a subset of less than $2^n/T^i$ Ranges.
It follows that there exist at least $2^n/T^{i-1} - 2^m \cdot 2^n/T^i = 2^n/T^i$ Ranges that are equal to the set $\{z_1, \ldots, z_{i-1}\}$. 

To conclude, there exists a set $\{z_1, \ldots, z_s\}$, with $s \leq K$, that is equal to ${\rm Range}(x)$ for at least $2^n/ (2^m+1)^K$ strings $x \in \zo^n$. Therefore there exists such a string $x$ with $C(x) \geq n - K \log (2^m+1)$. Each element $z_i$ is described by $i \leq K$, $n$ and $k$. We represent $i$ on exactly $k+1$ bits and this also describes $k$. Therefore $C(z_i) \leq k + \log n + 2 \log \log n + O(1)$. The conclusion follows.~\qed
\end{proof}
\subsection{Limits on extraction from two sources}
The following theorem shows that there is no uniform function that from two sources $x$ and $y$ that are $\alpha$-dependent (\ie, ${\rm dep}(x,y) \succeq \alpha$), produces an output whose randomness deficiency is less than $\alpha - \log n - O(\log \alpha)$.
\begin{theorem}
\label{t:twodependentsources}
Let $f: \zo^n \times \zo^n \mapping \zo^m$ be a computable function and let $\alpha \in \nat$, $\alpha \leq m$. Then there exists a pair of strings $x \in \zo^n, y \in \zo^n$ such that
\[
\begin{array}{ll}
C(x \mid y) & \geq n - \alpha - 2\log n \\
C( y \mid x ) & \geq n - \alpha -  2\log n \\
C(f (x,y) ) & \leq m - \alpha + \log n + 2\log \alpha + O(1).
\end{array}
\] 
\end{theorem}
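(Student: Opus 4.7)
The plan is to locate a low-complexity image $z$ whose preimage $A_z = f^{-1}(z)$ contains a pair $(x,y)$ with near-maximal conditional Kolmogorov complexities. Sort the $2^m$ strings of length $m$ in decreasing order of popularity $|A_z|$, breaking ties lexicographically, as $z_1, z_2, \ldots$, and let $T = \{z_1, \ldots, z_{2^{m-\alpha}}\}$ be the top $2^{m-\alpha}$ entries (well defined since $\alpha \le m$). Because $f$ is computable, once $n$, $m$, and $\alpha$ are supplied this list is determined, so every $z \in T$ admits a description consisting of a self-delimited encoding of $\alpha$, an encoding of $n$, and a fixed-length $(m-\alpha)$-bit index inside $T$. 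This yields $C(z) \le m - \alpha + \log n + 2\log\alpha + O(1)$, already matching the bound demanded of $C(f(x,y))$.

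The key quantitative step is the lower bound
\[
\Bigl|\bigcup_{z \in T} A_z\Bigr| \;=\; \sum_{z \in T} |A_z| \;\ge\; 2^{m-\alpha} \cdot 2^{2n-m} \;=\; 2^{2n-\alpha}.
\]
Since $\sum_z |A_z| = 2^{2n}$ ranges over all $2^m$ outputs, the overall average popularity is $2^{2n-m}$, and because the sorted sequence is non-increasing, the average of its top $2^{m-\alpha}$ entries is at least this overall average. Multiplying by the number of terms gives the displayed bound.

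To finish, I count pairs that violate one of the conditional-complexity conditions. Applying the preliminary fact $|\{x : C(x \mid y) \le k\}| < 2^{k+1}$ with $k = n - \alpha - 2\log n - 1$ and summing over all $2^n$ values of $y$ bounds the number of pairs with $C(x \mid y) < n - \alpha - 2\log n$ by $2^{2n - \alpha - 2\log n}$; the symmetric bound holds for $C(y \mid x)$, so the bad pairs number at most $2^{2n - \alpha - 2\log n + 1}$. For $n \ge 2$ this is strictly less than $2^{2n-\alpha}$, so $\bigcup_{z \in T} A_z$ must contain some pair $(x,y)$ satisfying both $C(x\mid y) \ge n - \alpha - 2\log n$ and $C(y\mid x) \ge n - \alpha - 2\log n$. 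Such a pair has $f(x,y) \in T$, so the description from the first paragraph yields the required upper bound on $C(f(x,y))$. The step I expect to be most delicate is the choice of the truncation rank $2^{m-\alpha}$: it is precisely what simultaneously forces $C(z) \le m - \alpha + O(\log n + \log \alpha)$ and makes $\sum_{z \in T}|A_z|$ exceed the bad-pair count of $2^{2n - \alpha - O(\log n)}$, and the hypothesis $\alpha \le m$ is exactly what keeps this rank well defined.
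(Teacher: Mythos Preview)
Your proof is correct. It differs from the paper's in organization rather than in essence, but the differences are worth noting.

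The paper first reduces to the case $m=\alpha$ by replacing $f$ with the prefix map $g(x,y)=$ (first $\alpha$ bits of $f(x,y)$), takes the \emph{single} most popular value $a$ of $g$ (so $C(a)\le\log n+O(1)$ and $|g^{-1}(a)|\ge 2^{2n-\alpha}$), and then inside $g^{-1}(a)$ picks a pair with $C(xy)\ge 2n-\alpha$; the conditional bounds $C(x\mid y),C(y\mid x)\ge n-\alpha-2\log n$ follow from $C(xy)\le C(y)+C(x\mid y)+2\log n$ and $C(y)\le n+O(1)$. The general bound on $C(f(x,y))$ is then recovered by appending the $m-\alpha$ remaining bits to the description of $g(x,y)$.

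You skip the prefix reduction and instead take the top $2^{m-\alpha}$ most popular values directly, bounding $C(z)$ for each by its rank plus $n$ and $\alpha$; you then count ``bad'' pairs by summing $|\{x:C(x\mid y)<n-\alpha-2\log n\}|$ over $y$ (and symmetrically), rather than going through the joint complexity $C(xy)$. The averaging step (the top $2^{m-\alpha}$ preimage counts sum to at least $2^{2n-\alpha}$) plays the role that the single pigeonhole $|g^{-1}(a)|\ge 2^{2n-\alpha}$ plays in the paper. Your route avoids the case split and the chain-rule detour through $C(xy)$; the paper's route is slightly leaner in that it needs only one popular value and a single counting inequality. Both are equally valid and yield the identical bound.
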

\begin{proof}
We consider first the case $m=\alpha$. Let $a$ be the most popular string in the image of $f$. Then $C(a) < \log n + O(1)$. Since $|f^{-1}(a)|$ has at least $2^{2n-m}$ elements, there exists
strings $x$ and $y$ in $\zo^n$ such that $(x,y) \in f^{-1}(a)$ and $C(xy) \geq 2n-m$. Since $C(xy) \leq C(x \mid y) + C(y \mid x) + 2\log n$ and $C(x) \leq n + O(1)$ and $C(y) \leq n + O(1)$, it follows that
$C(x \mid x) \geq n- m - 2\log n$ and $C(y \mid x) \geq n- m - 2\log n$. Also $C(f(x,y)) = C(a) < \log n + O(1)$.

If $m > \alpha$, take $g(x,y)$ the prefix of length $\alpha$ of $f(x,y)$. Then $C(f(x,y)) \leq C(g(xy)) + (m-\alpha) + 2 \log \alpha + O(1) < \log n + (m-\alpha) + 2\log \alpha + O(1)$, and the conclusion follows.~\qed
\end{proof}

The following is the analog of Theorem~\ref{t:twodependentsources} for distributions. 
\begin{theorem}
\label{t:impossdistributions}
Let $f: \zo^n \times \zo^n \mapping \zo^m$ be a function and let $\alpha \in \nat$, $\alpha \leq m$. Then there exists two random variables $X$ and $Y$ taking values in $\zo^n$ such that
\[
\begin{array}{ll}
H_\infty(X) & \geq n- \alpha \\
H_\infty(Y) & \geq n- \alpha \\
H_\infty(X,Y) & \geq 2n - \alpha \\
H_\infty(f(X,Y)) &  \leq  m - \alpha
\end{array}
\] 
\end{theorem}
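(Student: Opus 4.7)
My plan is to mirror the proof of Theorem~\ref{t:twodependentsources}, reducing first to the case $m = \alpha$ and then letting the sought random variables $(X,Y)$ be uniformly distributed on the preimage of the most popular value of $f$. The calculations are cleaner in the distributional setting because there are no logarithmic error terms to track.

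For the base case $m = \alpha$, I would fix the most popular image value $a \in \zo^m$, i.e., a value with $|f^{-1}(a)| \geq 2^{2n}/2^m = 2^{2n-\alpha}$. Let $(X,Y)$ be the uniform distribution on $f^{-1}(a) \subseteq \zon \times \zon$. Then $H_\infty(X,Y) = \log|f^{-1}(a)| \geq 2n - \alpha$, and $f(X,Y)$ is the constant $a$, so $H_\infty(f(X,Y)) = 0 = m-\alpha$. The nontrivial verification is for the marginals: for each $x \in \zon$ one has
\[
\Pr[X = x] \;=\; \frac{|\{y : (x,y) \in f^{-1}(a)\}|}{|f^{-1}(a)|} \;\leq\; \frac{2^n}{2^{2n-\alpha}} \;=\; 2^{-(n-\alpha)},
\]
because for any fixed $x$ there are at most $2^n$ possible $y$'s. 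Hence $H_\infty(X) \geq n-\alpha$, and by symmetry $H_\infty(Y) \geq n-\alpha$.

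For the reduction $m > \alpha$, I would define $g(x,y)$ to be the length-$\alpha$ prefix of $f(x,y)$ and apply the base case to $g$, producing $(X,Y)$ with $H_\infty(X), H_\infty(Y) \geq n-\alpha$, $H_\infty(X,Y) \geq 2n-\alpha$, and $g(X,Y)$ constant equal to some $a \in \zo^\alpha$. Since $g(X,Y)$ is the fixed prefix $a$, the random variable $f(X,Y)$ takes values in the set of length-$m$ strings starting with $a$, which has size $2^{m-\alpha}$; consequently $H_\infty(f(X,Y)) \leq m - \alpha$, as required.

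I do not expect a genuine obstacle here — the key insight is simply that the trivial upper bound $|\{y : (x,y) \in S\}| \leq 2^n$ is enough to control the marginal min-entropies once $|S| \geq 2^{2n-\alpha}$. The distributional version avoids the $O(\log n)$ slack that appears in Theorem~\ref{t:twodependentsources} because there is no need to invoke Symmetry of Information or to encode anything in self-delimited form; the only non-bookkeeping step is the pigeonhole choice of the most popular image value.
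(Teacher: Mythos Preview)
Your proof is correct and matches the paper's approach: both take $(X,Y)$ uniform on the preimage of the most popular output value (bounding the marginals via the trivial slice bound $|\{y:(x,y)\in S\}|\le 2^n$), and both reduce the case $m>\alpha$ to $m=\alpha$ by passing to the length-$\alpha$ prefix $g$ of $f$. Your final step---observing that once $g(X,Y)$ is constant the support of $f(X,Y)$ has size at most $2^{m-\alpha}$, hence $H_\infty(f(X,Y))\le m-\alpha$---is slightly more direct than the paper's, which instead introduces the suffix function $h$ and pigeonholes to exhibit a single string $ab$ with $\Pr[f(X,Y)=ab]\ge 2^{-(m-\alpha)}$.
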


\begin{proof}
Suppose first that $m = \alpha$. Let $a$ be the most popular string in the image of $f$. Then $|f^{-1}(a)| \geq 2^{2n-m}$. Take (arbitrarily) $B \subseteq f^{-1}(a)$ with $|B| = 2^{2n-m}$. Consider
$\mbox{LEFT-B}$ the multiset of $n$-bit prefixes of strings in $B$ and $\mbox{RIGHT-B}$ the multiset of $n$-bit suffixes of strings in $B$. The multiplicity of a string $x$ in $\mbox{LEFT-B}$ is equal to the number of strings in $B$ that have $x$ as their left half. Thus each string in $\mbox{LEFT-B}$ has multiplicity at most $2^n$. Counting multiplicities $\mbox{LEFT-B}$ has $2^{2n-m}$ elements. Therefore $\mbox{LEFT-B}$ has at least $2^{n-m}$ distinct strings. The same holds for $\mbox{RIGHT-B}$. We take $X$ to be the random variable obtained by choosing uniformly at random one element in the multiset $\mbox{LEFT-B}$ and $Y$ is the random variable obtained by choosing uniformly at random one element in the multiset $\mbox{RIGHT-B}$. By the above discussion for each $x \in \zo^n$ and $y \in \zo^n$,
\[
\begin{array}{ll}
\prob[X=x] & \leq \frac{2^n}{2^{2n-m}} = \frac{1}{2^{n-m}}, \\
\prob[Y=y] & \leq \frac{2^n}{2^{2n-m}} = \frac{1}{2^{n-m}}, \\
\prob[X=x, Y=y] & \leq \frac{1}{2^{2n-m}}.
\end{array}
\]
Thus, $X$ and $Y$ satisfy the requirements, and $\prob[f(X,Y) = a] = 1$.

Suppose now that $m > \alpha$. We define $g, h :\zo^n \times \zo^n \mapping \zo^{\alpha}$ by
$g(x,y) = $ prefix of length $\alpha$ of $f(x,y)$ and $h(x,y) = $ suffix of length $m-\alpha$ of $f(x,y)$. Let $a \in \zo^{\alpha}$ and  the random variables $X$ and $Y$ defined as in the first part of the proof (\ie, the case $m=\alpha$) but with $g$ replacing $f$. Note that $\prob[g(X,Y) = a] = 1$. Let $b$ be a string in $\zo^{m-\alpha}$ such that $h^{-1}(b)$ has at least $2^{2n}/2^{m-\alpha}$ elements. Then
\[
\begin{array}{ll}
\prob [ f(X,Y) = ab] & = \prob[g(X,Y) = a, h(X,Y) = b] \\
&= \prob[h(X,Y) = b] \\
& \geq \frac{2^{2n}/2^{m-\alpha}}{2^{2n}} = 2^{-(m-\alpha)}.

\end{array}
\]
This concludes the proof.~\qed
\end{proof}

\if01
The following is a strenghtening of Theorem~\ref{t:twodependentsources}, in the spirit of Theorem~\ref{t:nonextractablex}. Its proof is omitted in this extended abstract.

\begin{theorem}
For every $n$, $k$, $\alpha$, $m \geq \alpha$, there exists a pair of strings $x \in \zo^n, y \in \zo^n$ such that
\[
\begin{array}{ll}
C(x) & \geq n - \alpha - k - O(\log n)\\
C( y \mid x ) & \geq n - \alpha -  k - O(\log n) \\
\end{array}
\] 
and for every function $f:\zon \times \zon \mapping \zo^m$ computable with advice $k$,
\[
C(f(x,y)) \leq m - \alpha + (3k + 2\log n + 2\log k + O(1)).
\]
\end{theorem}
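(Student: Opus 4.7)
My plan is to combine the techniques of Theorems~\ref{t:twodependentsources} and~\ref{t:nonextractablex}(a). First, enumerate as $f_1, \ldots, f_K$ all functions from $\zo^n \times \zo^n$ to $\zo^m$ that are computable with advice at most $k$, where $K = 2^{k+1} - 1$. For each $i$, sort the $2^m$ potential outputs of $f_i$ by preimage size in $\zo^n \times \zo^n$ (ties broken lexicographically) and let $T_i$ consist of the top $2^{m - \alpha + 2k}$ of them. An element $a \in T_i$ is describable from the $(k+1)$-bit advice that specifies $f_i$ and encodes $k$, a self-delimiting encoding of $n$ and $\alpha$, and the $(m - \alpha + 2k)$-bit rank of $a$ inside $T_i$; hence $C(a) \leq m - \alpha + 3k + 2 \log n + 2 \log k + O(1)$. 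It would therefore suffice to produce a pair $(x,y)$ meeting the advertised complexity bounds for which $f_i(x,y) \in T_i$ for every $i \in [K]$.

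The core technical step is then to show that the set $G = \bigcap_{i \in [K]} f_i^{-1}(T_i)$ has size at least $2^{2n - (\alpha + k) - O(\log n)}$. Once that holds, $G$ is uniformly enumerable from $n$, $m$, $\alpha$, $k$, and a counting argument yields some $(x,y) \in G$ with $C(xy) \geq \log |G| - O(\log n) \geq 2n - (\alpha + k) - O(\log n)$; combined with the trivial bounds $C(x), C(y) \leq n + O(1)$ and the symmetry of information theorem, this gives $C(x) \geq n - \alpha - k - O(\log n)$ and $C(y \mid x) \geq n - \alpha - k - O(\log n)$. The bound on $C(f(x,y))$ for every $f$ computable with advice $k$ then follows from the bound on members of $T_i$ derived above, since any such $f$ equals $f_i$ for some index $i$.

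I expect the main obstacle to be exactly this lower bound on $|G|$. The top-popularity argument gives the one-coordinate bound $|f_i^{-1}(T_i)| \geq |T_i|\cdot 2^{2n-m} = 2^{2n - \alpha + 2k}$, but the naive union bound over $i$ becomes vacuous once $\alpha$ exceeds roughly $2k$, which is precisely the non-trivial regime (for smaller $\alpha$ the conclusion already follows from $C(f(x,y)) \leq m + O(1)$). To bypass the union bound I would adapt the enumeration idea of Theorem~\ref{t:nonextractablex}(a) to the two-source setting: classify each $(x,y)$ by the joint profile $\pi(x,y) = \{(i, f_i(x,y)) : i \in [K], \, f_i(x,y) \in T_i\}$, show that the number of realized profiles of the full all-in-$T_i$ type is combinatorially bounded, and apply pigeonhole over $\zo^n \times \zo^n$ to find a popular all-in profile whose preimage sits inside $G$. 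The quantitative payoff compared to the single-source analogue is that the $2^{2n}$ pigeonhole room available to pairs absorbs the $K$-wise combinatorial blowup into the $\alpha + k$ budget, rather than into the much larger $K(m-k)$ budget that Theorem~\ref{t:nonextractablex}(a) pays. Alternatively one could iterate a Theorem~\ref{t:twodependentsources}-style restriction to preimages of popular outputs one $f_i$ at a time while carefully tracking the complexity budget; in either approach the heart of the difficulty is making the final loss come out to only $\alpha + k$ rather than $K \cdot \alpha$ or $K(m-k)$.
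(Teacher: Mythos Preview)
The paper explicitly omits this proof (the theorem sits inside a conditional block with the remark ``Its proof is omitted in this extended abstract''), so there is no in-paper argument to compare against; I can only evaluate your plan on its own terms.

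You have correctly isolated the crux: once one accepts your encoding of elements of $T_i$ (which is fine and does give $C(a)\le m-\alpha+3k+O(\log n)$), everything reduces to showing $|G|\ge 2^{2n-(\alpha+k)-O(\log n)}$ for $G=\bigcap_{i\le K} f_i^{-1}(T_i)$. The problem is that neither of your two proposed routes can deliver this bound, because a purely combinatorial lower bound on $|G|$---one that treats the $f_i$ as $K$ arbitrary functions---is simply false. Take $m=\alpha$ and let $f_i(x,y)$ return the $i$-th length-$\alpha$ block of $x$; every output of each $f_i$ is equally popular, so $T_i$ is just the lexicographically first $2^{2k}$ strings of length $\alpha$, and $|G|=2^{2n-K(\alpha-2k)}$. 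For $\alpha>2k$ this is exponentially small in $K\approx 2^{k+1}$ and nowhere near $2^{2n-\alpha-k}$. (These particular $f_i$ need roughly $\log i+\log\alpha$ bits of advice, so they are not literally the first $K$ programs for the fixed $U$; the point is that any argument that ignores the specific origin of the $f_i$'s cannot succeed.)

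Your ``profile'' pigeonhole does not escape this: the pairs $(x,y)$ carrying an all-in profile are exactly the elements of $G$, so partitioning them by profile still only partitions a set that may be essentially empty. Your remark that the $2^{2n}$ pigeonhole room ``absorbs'' the $K$-wise blowup is also off---the number of Ranges (or profiles) depends on $K$ and $m$, not on $n$, so doubling the input length from $n$ to $2n$ does not improve the ratio at all; the direct analogue of Theorem~\ref{t:nonextractablex}(a) on $\{0,1\}^{2n}$ still loses $\log b(2^{m'},K)\approx K(m'-k)$, regardless of the extra room. The iterative restriction has the same arithmetic: losing a factor $2^{m}/t$ per step over $K$ steps forces either $K(m-\log t)\le \alpha+k$ (so $\log t\approx m$, and elements of $T_i$ cost $\approx m$ bits to name) or a final set of size $2^{2n-K(m-\log t)}\ll 2^{2n-\alpha-k}$. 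A correct proof must therefore exploit something beyond the bare combinatorics of $K$ generic functions---for instance, the fact that the $f_i$ are the first $K$ programs of a fixed universal machine and hence cannot be collectively information-theoretically independent in the way the block-extraction example is. Your plan does not yet contain that missing idea.
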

\fi

\section{Kolmogorov complexity extraction}
We construct a Kolmogorov extractor that on input two $n$-bit strings with Kolmogorov complexity at least $s(n)$ and dependency at most $\alpha(n)$ outputs a string of length $\approx s(n)$ having complexity $\approx s(n) - \alpha(n)$ conditioned by any one of the input strings.
\subsection{Proof overview}
\label{s:proofoverview}
For an easier orientation in the proof, we describe the main ideas of the method. We also explain the non-trivial way in which the new construction extends the technique from the earlier works~\cite{zim:c:kolmlimindep} and~\cite{zim:c:genindepstringsCiE09}.  For readability, some details are omitted and some estimations are slightly imprecise.  Let us fix, for the entire discussion,  $x$ and $y$,  two $n$-bit strings with $C(x) \geq s(n)$ and $C(y) \geq s(n)$ and having dependency at most $\alpha(n)$.  We denote $N=2^n, M=2^m$ and $S=2^{s(n)}$. Let $B_x =\{ u \mid C(u) \leq C(x) \}$ and $B_y = \{ v \mid C(v) \leq C(y)\}$. An  $N$-by-$N$ table colored with $M$ colors is a function $T:[N] \times [N] \mapping [M]$. If we randomly color such a table $T$, with parameter $m \preceq 2 s(n)$, then, with high probability, no color appears in the $B_x \times B_y$ rectangle more than $2 \cdot (1/M)$ fraction of times (we say that a table that satisfies the above balancing property is balanced in $B_x \times B_y$). Clearly $(x,y) \in B_x \times B_y$ and in a table $T$ balanced in $B_x \times B_y$ there are at most $2 \cdot (1/M) \cdot |B_x| \times |B_y| \approx 2 \cdot (1/M) 2^{C(x)} 2^{C(y)} = 2^{C(x) + C(y) - m +1}$ entries with the color $z = T(x,y)$. Therefore $(x,y)$ is described by the color $z = T(x,y)$, the rank $r$ of the $(x,y)$ cell in the list of all $z$-colored cells in $B_x \times B_y$, by the table $T$, and by $O(\log n)$ additional bits necessary to enumerate the list. Thus, $C(xy) \leq C(z) + \log r + C(\mbox{table } T)+ O(\log n)$. By the above estimation, $\log r \approx C(x) + C(y) - m$. Also $C(xy) \geq C(x) + C(y) - {\rm dep}(x,y)$. Suppose that we are able to get a balanced table $T$ with $C(\mbox{table } T) = O(\log n)$, \ie, a table that can be described with $O(\log n)$ bits. Then we would get that $C(T(x,y)) = C(z) \geq m - {\rm dep}(x,y)$, which is our goal. How can we obtain $C(\mbox{table } T) = O(\log n)$? The normal approach would be to enumerate all possible $N$-by-$N$ tables with all possible colorings with $M$ colors and pick the first one that satisfies the balancing property. However, since $B_x$ and $B_y$ are only computably enumerable, we can never be sure that a given table has the balancing property. Therefore, instead of restricting to only $B_x$ and $B_y$, we require that a table $T$ should satisfy the balancing property for \emph{all} rectangles $B_1 \times B_2$ with sizes $|B_1| \geq S$ and  $|B_2| \geq S$, where $S = 2^{s(n)}$.  The simple probabilistic analysis involves only an additional union bound and carries over showing that such balanced tables exist at the cost that this time we need $m \preceq s(n)$.  Now we can pick  in an effective way the smallest (in some canonical order) table $T$ having the balancing property, because we can check the balancing property in an exhaustive manner (look at all $S \times S$-sized rectangles, etc.). Therefore this table $T$ can be described with $\log n + O(1)$ bits, as desired. In this way, from any $x$ and $y$, each having Kolmogorov complexity at least $s(n)$, we obtain $m \approx s(n)$ bits having Kolmogorov complexity $m - {\rm dep}(x,y)$. We reobtain $m \approx 2 s(n)$ if we change the balancing property and require that for any subset of colors $A \subseteq [M]$ of size $M/D$, for $D \approx 2^{\alpha(n)}$, for any rectangle $B_1 \times B_2$ with sizes $|B_1| \geq S$ and $|B_2| \geq S$, the fraction of $A$-colored cells in $B_1 \times B_2$  should be at most $2 \cdot (|A|/M) = 2 \cdot (1/D)$. Such a table can be obtained with $m \approx 2 s(n)$, and thus we can extract $\approx 2 s(n)$ bits having Kolmogorov complexity $\approx 2 s(n) - {\rm dep}(x,y)$, which is optimal.

Let us consider next the problem of extracting bits that are random even conditioned by $x$, and also conditioned by $y$. Suppose we use tables that satisfy the first balancing property. We focus on $B_x = \{u \mid C(u) \leq C(x) \}$ and we call a column $v$ \emph{bad} for a color $a \in [M]$ if the fraction of $a$-colored cells in the strip $B_x \times \{v\}$ of the table $T$ is more than $2 \cdot (1/M)$. The number of bad columns is less than $S$; otherwise the table would have an $S \times S$-sized rectangle that does not have the balancing property. Note that a bad column for a color $a$ can be described by the color $a$ and its rank in an enumeration of the columns that are bad for $a$ plus additional $O(\log n)$ bits. So if $v$ is a bad column, then $C(v) \preceq m + s(n) \approx 2 s(n)$. Therefore if $C(y) \succeq 2 s(n)$, $y$ is good for any color. An adaptation of the above argument shows that for $z = T(x,y)$ it holds that $C(x \mid y) \preceq C(z \mid y) + C(x) + C(y) - m$, which combined with $C(x \mid y) \succeq C(x) + C(y) - {\rm dep}(x,y)$, implies $C(z \mid y) \succeq m - {\rm dep}(x,y)$. The above holds only for $y$ with $C(y) \geq 2s(n)$ and since the probabilistic analysis  requires $m$ to be less than $s(n)$, it follows that the number of extracted bits (which is $m$) is less than half the Kolmogorov complexity of $y$.

The above technique was used in~\cite{zim:c:kolmlimindep} and in~\cite{zim:c:genindepstringsCiE09}. To increase the number of extracted bits, we introduce a new balancing property, which we dub \emph{rainbow balancing}. Fix some parameter $D$, which eventually will be taken such that $\log D \approx {\rm dep}(x,y)$. Let ${\cal A}_D$ be the collection of sets of colors $A \subseteq [M]$ with  size $|A| \approx M/D$. Let $B_1 \subseteq [N]$ be a set of size a multiple of $S$, let $\overline{v} = \{v_1 < v_2 \ldots < v_S\}$ be a set of $S$ columns, and let $\overline{A} = (A_1, \ldots, A_S)$ be a tuple with each $A_i$ in ${\cal A}_D$. We say that a cell $(u,v_i)$ such that $T(u,v_i) \in A_i$ is properly colored with respect to $\overline{v}$ and $\overline{A}$.  Finally we say that a table $T: [N] \times [N] \mapping [M]$ is $(S,D)$-rainbow balanced if for every $B_1$, every $\overline{v}$, and every $\overline{A}$, the fraction of cells in $B_1 \times \overline{v}$ that are properly colored with respect to $\overline{v}$ and $\overline{A}$ is at most $2 \cdot (1/D)$. The probabilistic method shows that such tables exist provided $m \preceq s(n)$ and $\log D \preceq s(n)$. Since the rainbow balancing property can be effectively checked, there is an $(S,D)$-rainbow balanced table $T : [N] \times [N] \mapping [M]$ that can be described with $\log n + O(1)$ bits and $m \approx s(n)$ and $\log D \approx s(n)$. Let $z = T(x,y)$ and suppose that $C(z \mid y) < m - t$, where $t = \alpha(n) - c \log n$, for some constant $c$ that will be defined later (in the actual proof we do a tighter analysis and we manage to take $t = \alpha(n) - O(1)$). For each $v$, let $A_v = \{w \in [M] \mid C(w \mid v) < m - t\}$. For $ \log D \approx \alpha(n) + c \log m$, it holds that $A_v \in {\cal A}_D$
 for all $v$. Let us call a column $v$ \emph{bad} if the fraction of cells in $B_x \times \{v\}$ that are $A_v$-colored is larger than $2 \cdot (1/2^t)$. Analogously to our earlier discussion, the number of bad columns is less than $S$ and from here we infer that if $v$ is a bad column, then $C(v) \preceq s(n)$. Since $C(y) \geq s(n)$, it follows that $y$ is a good column. Therefore the fraction of cells in the $B_x \times \{y\}$ strip of the table $T$ that have a color in $A_y$ is at most $2\cdot (1/2^t)$. Since $(x,y)$ is one of these cells, it follows that, given $y$, $x$ can be described by the rank $r$ of $(x,y)$ in an enumeration of the $A_y$-colored cells in the strip $B_x \times \{y\}$, a description of the table $T$, and by $O(\log n)$ additional bits necessary for doing the enumeration. Note that there are at most $2 \cdot (1/2^t) \cdot |B_x| \approx 2^{-t+1} \cdot 2^{C(x)}$ cells in $B_x \times \{y\}$ that are $A_y$-colored and, therefore, $\log r \leq C(x) - t + 1$. From here we obtain that $C(x \mid y) \leq C(x) - t + 1 + O(\log n) = C(x) - \alpha(n) - c \log n +O(\log n)$. Since $C(x \mid y) \geq C(x) - \alpha(n)$, we obtain a contradiction for an appropriate choice of the constant $c$. Consequently $C(z \mid y) \geq m - t = m - \alpha(n) + c \log n$. Similarly, $C(z \mid x) \geq m - \alpha(n) + c \log n$. Thus we have extracted $m \approx s(n)$ bits that have Kolmogorov complexity $\approx m - \alpha(n)$ conditioned by $x$ and also conditioned by $y$.

\subsection{Construction of the Kolmogorov extractor}
\label{s:construction}
For $n$ and $m$ natural numbers, let $N = 2^n$ and $M = 2^m$. Henceforth, we identify $\zo^n$ with $[N]$ and $\zo^m$ with $[M]$.

We consider functions of the form $T: [N] \times [N] \mapping [M]$, which we view as $N$-by-$N$ tables whose cells are colored with colors in $[M]$. Let $S$ and $D$ be parameters with $S \leq N$ and $D \leq M$.

Let ${\calA}_D = \{A \mid A \subseteq [M], (M/D) \leq |A| \leq (M/D)m^2 \}$. Thus, the elements of $\calA_D$ are those sets of colors having at least $M/D$ colors and not much more than that. 

Let $B_2 \subseteq [N]$ be a subset of size $S$; we name its elements $B_2 = \{v_1 < v_2 < \ldots < v_S\}$. We view $B_2$ as a set of columns in the table. Let $(A_1, \ldots, A_S) \in (\calA_D)^S$.  The cell $(u,v_i) \in [N] \times B_2$
is \emph{properly colored} with respect to the columns in $B_2$ and $(A_1, \ldots, A_S)$ if $T(u,v_i) \in A_i$. A similar notion of a cell being properly colored with respect to \emph{rows} in a set $B_1 \subseteq [N]$ will also be used.
\begin{definition}
\label{d:tables}
A table $T: [N] \times [N] \mapping [M]$ is $(S,D)$-rainbow balanced if
\begin{itemize}
\item[(a)]
\begin{itemize}
	\item for all $B_1 \subseteq [N]$ of size $k \cdot S$ for some positive natural number $k$,
	\item for all $B_2 \subseteq [N]$ of size $S$,
	\item for all $(A_1, \ldots, A_S) \in (\calA_D)^S$,
	
\end{itemize}
it holds that the number of cells in $B_1 \times B_2$ that are properly colored with respect to columns $B_2$ and $(A_1, \ldots, A_S)$ is at most
\[
2m^2 \frac{|B_1| \cdot |B_2|}{D},
\]
and 
\item[(b)]
if the similar relation holds if we switch the roles of $B_1$ and $B_2$.

\end{itemize}
\end{definition}
\begin{lemma}
\label{l:tables}
If $S \geq 12D + 3 (1 + \ln D) M m^2 + 6D \ln (N/S)$, there exists a table $T:[N] \times [N] \mapping [M]$ that is $(S,D)$-rainbow balanced.
\end{lemma}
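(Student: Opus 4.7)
The plan is to establish this by a straightforward probabilistic method: construct $T$ by coloring each of the $N^2$ cells independently and uniformly at random from $[M]$, and show that both conditions (a) and (b) hold simultaneously with positive probability. Since condition (b) is simply condition (a) with the roles of rows and columns swapped, it suffices to bound the failure probability of (a) and then double the estimate.

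First I would fix a candidate violating tuple for (a): a set $B_1 \subseteq [N]$ of size $L = kS$, columns $B_2 = \{v_1 < \cdots < v_S\}$, and $(A_1,\ldots,A_S) \in (\calA_D)^S$. Let $X$ count the cells $(u,v_i)\in B_1 \times B_2$ with $T(u,v_i) \in A_i$. Then $X$ is a sum of $LS$ independent $\{0,1\}$ variables, each succeeding with probability $|A_i|/M \leq m^2/D$, so its expectation satisfies $\mu \leq LSm^2/D$. The target threshold $2m^2 LS/D$ is thus at least $2\mu$, and the Chernoff bound recalled in the preliminaries (in its standard extension to $d \geq 1$, where $\Pr[X > (1+d)\mu]\leq e^{-d\mu/3}$) gives
\[
\Pr\bigl[X > 2m^2LS/D\bigr] \;\leq\; \exp\bigl(-LSm^2/(3D)\bigr).
\]

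Next I would union-bound over all such tuples. For a fixed $L$, the number of choices is at most $\binom{N}{L}\binom{N}{S}\, b(M,(M/D)m^2)^S$. Using $\binom{N}{L}\leq(eN/L)^L$, $\binom{N}{S}\leq(eN/S)^S$, and the estimate from the preliminaries $\log b(M,(M/D)m^2) \leq (M/D)m^2(1+\log D) + O(\log M)$, the log of this count is dominated by three terms:
\[
L\bigl(1+\ln(N/L)\bigr)\;+\;S\bigl(1+\ln(N/S)\bigr)\;+\;S(M/D)m^2(1+\ln D).
\]

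The final step, and the only delicate one, is matching these three contributions to the three summands in the hypothesis $S \geq 12D + 3(1+\ln D)Mm^2 + 6D\ln(N/S)$. Concretely, I would split the Chernoff exponent $LSm^2/(3D)$ into three pieces: the summand $3(1+\ln D)Mm^2$ is tailored to dominate $S(M/D)m^2(1+\ln D)$ (hence the $b(M,\cdot)^S$ factor), the summand $6D\ln(N/S)$ dominates the two binomial contributions (using $L \geq S$), and the $12D$ term provides the constant slack needed to (i) sum the resulting geometric series over $L = S, 2S, 3S, \ldots$ and (ii) absorb the factor of $2$ incurred by treating part (b) symmetrically. Once each piece of the hypothesis is matched to its intended combinatorial counterpart, the total failure probability is strictly less than $1$, so a $(S,D)$-rainbow balanced table exists. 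The main obstacle is purely the arithmetic bookkeeping in this last step; the probabilistic core is routine.
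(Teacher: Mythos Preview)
Your proposal is correct and follows the same probabilistic-method template as the paper (random coloring, Chernoff on each fixed configuration, union bound), but it diverges from the paper's argument in one place that makes your bookkeeping harder than necessary.

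The paper disposes of the quantifier over $|B_1|=kS$ with a one-line observation you do not use: if condition~(a) holds for every $B_1$ of size \emph{exactly} $S$, then it automatically holds for every $B_1$ of size $kS$, since such a set can be partitioned into $k$ size-$S$ pieces and the proper-coloring counts add. This eliminates your sum over $L=S,2S,3S,\ldots$ entirely; the union bound is then taken only over $S$-element $B_1$, $S$-element $B_2$, and $(A_1,\ldots,A_S)\in(\calA_D)^S$, and the resulting inequality
\[
\tfrac{1}{3D}S^2 \;\geq\; 3S + S(1+\ln D)(M/D)m^2 + 2S\ln(N/S) + 1
\]
matches the hypothesis on $S$ directly, with no geometric-series step.

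A second, minor difference: the paper applies Chernoff only at $d=1$, obtaining exponent $\mu/3\geq S^2/(3D)$ via the \emph{lower} bound $|A_i|\geq M/D$; your sharper exponent $LSm^2/(3D)$, obtained from the $d\geq 1$ form and the \emph{upper} bound $\mu\leq LSm^2/D$, is valid but unnecessary once the reduction to $|B_1|=S$ is made. Your route can be pushed through as written, but the size-$S$ reduction is the cleaner move and is what lets the paper avoid the ``arithmetic bookkeeping'' you flag as the main obstacle.
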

\begin{proof}
We use the probabilistic method. We show that a randomly colored table fails with probability  $< 1/2$ to satisfy the proper coloring property with respects to columns (property (a) in definition~\ref{d:tables}). A similar calculation shows the similar fact about proper coloring with respects to rows (property (b) in definition~\ref{d:tables}). Therefore we can conclude that a $(S,D)$-rainbow balanced table exists.

Observe that it is enough to consider sets $B_1$ of size exactly $S$ (because a set of size $kS$ can be broken into $k$ sets of size $S$ and if each smaller set satisfies the property, then the larger set will satisfy it as well).

Therefore, let us fix $B_1$ and $B_2$ subsets of $[N]$ of size $S$, let $B_1=\{u_1 < \ldots < u_S\}$ and  $B_1=\{v_1 < \ldots < v_S\}$. We fix $(A_1, \ldots, A_S) \in (\calA_D)^S$.

Let $X_{i,j}$ be the random variable which is $1$ if the cell$(u_i, v_j)$ is properly colored with respect to columns in $B_2$  and $(A_1, \ldots A_S)$ (\ie, $T(u_i, v_j) \in A_j$), and $0$ otherwise.
Then 
\[
\prob[X_{i,j} = 1] = \frac{|A_j|}{M} = \mu_j \in [1/D, m^2/D].
\]
Let $X = \sum_{i \in B_1, j \in B_2} X_{i,j}$. Then
\[
\mu = E[X] = \sum_{j \in B_2} \sum_{i \in B_1} E[X_{i,j}] = \sum_{j \in B_2}S \cdot \mu_j \in [S^2/D, S^2 \cdot m^2/D].
\]
By the Chernoff bounds,
\[
\prob[X \geq 2 \mu] \leq e^{-(1/3) \mu} \leq e^{-(1/3) (S^2/D)}.
\]
It follows that
\[
\prob[X \geq 2 \frac{S^2 m^2}{D}] \leq \prob[ X \geq 2 \mu] \leq e^{-(1/3) (S^2/D)}.
\]
We next take the union bound over all possible choices of $(A_1, \ldots, A_S) \in (\calA_D)^S$, and all possible choices of $B_1$ and $B_2$ subsets of $[N]$ of size $S$.

For $T \in [M/D, M\cdot m^2/D]$, the number of sets in $[M]$ of size $T$ is ${M \choose T} \leq (\frac{eM}{T})^T = e^T \cdot e^{T \ln (M/T)} \leq e^{T + T \ln D}$.
So the number of subsets of $[M]$ with sizes between $M/D$ and $M \cdot m^2/D$ is at most
\[
\sum_{T=M/D}^{M\cdot m^2/D} e^{T(1+ \ln D)}.
\]
Denoting $q = e^{(1+ \ln D)}$, the above sum is
\[
\begin{array}{ll}
\sum_{T=M/D}^{M\cdot m^2/D} e^{T(1+ \ln D)} & = q^{(M/D)} + q^{(M/D) + 1} + \ldots + q^{(M/D)m^2} \\
& = q^{(M/D)} \frac{q^{(M/D)(m^2 - 1) + 1} - 1}{q-1} \\
& < q^{(M/D)} \cdot q^{(M/D)m^2} \cdot q^{-(M/D)} \cdot \frac{q}{q-1} \\
& < 2 q^{(M/D)\cdot m^2} = 2 \cdot e^{(1 + \ln D) \cdot (M/D)\cdot m^2}.

\end{array}
\]
So the number of tuples $(A_1, \ldots, A_S) \in (\calA_D)^S$ is less than
$2^S \cdot e^{S \cdot (1 + \ln D) \cdot (M/D)\cdot m^2}$.

The number of ways of choosing $B_1$ and $B_2$ is
\[
{N \choose S} \cdot {N \choose S} \leq \big( \frac{eN}{S}\big)^{2S} = e^{2S + 2S \ln (N/S)}.
\]
For the union bound to give a probability $\leq e^{-1} < 1/2$ we need
\[
(1/3)(1/D)S^2 \geq S + S(1 + \ln D)(M/D)m^2 + 2S + 2S \ln(N/S) + 1,
\]
which holds true if the parameters satisfy the hypothesis.~\qed
\end{proof}

\begin{theorem}
\label{t:extractor}
For any computable functions $s(n)$ and $\alpha(n)$ with $n \geq s(n) \geq \alpha(n) + 7 \log n + O(1)$, for every computable function $m(n)$ with $m(n) \leq s(n)-7 \log n$, there exists a computable function $E: \zon \times \zon \mapping \zo^{m(n)}$, such that for all $x$ and $y$ in $\zo^n$ if
\begin{itemize}
	\item[(i)] $C(x) \geq s(n), C(y) \geq s(n)$,
	\item[(ii)] $C(x) - C(x \mid y) \leq \alpha(n)$ and $C(y) - C(y \mid x) \leq \alpha(n)$,

\end{itemize}
then
\begin{itemize}
	\item [(1)] $C(E(x,y) \mid x) \geq m - \alpha(n) - O(1)$,
	\item [(2)] $C(E(x,y) \mid y) \geq m - \alpha(n) - O(1)$.
\end{itemize}

\end{theorem}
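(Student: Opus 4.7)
The plan is to follow the sketch of Section~\ref{s:proofoverview}. Fix $N=2^n$, $M=2^{m(n)}$, $S=2^{s(n)}$, and $D=2^{\alpha(n)+c_0}$ with $c_0$ an absolute constant to be tuned. The hypotheses $s(n)\geq\alpha(n)+7\log n+O(1)$ and $m(n)\leq s(n)-7\log n$ make the condition $S\geq 12D+3(1+\ln D)Mm^2+6D\ln(N/S)$ of Lemma~\ref{l:tables} hold, so an $(S,D)$-rainbow balanced table exists. Because the rainbow balancing property is decidable (a finite check), let $T_n$ be the lexicographically first such table and define $E(x,y):=T_n(x,y)$; then $T_n$ is uniformly computable from $n$, so $C(T_n\mid n)=O(1)$ and $E$ is computable. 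Throughout write $z:=T_n(x,y)$.

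By clause~(b) of Definition~\ref{d:tables}, the analysis is symmetric in rows and columns, so it suffices to prove $C(z\mid y)\geq m-\alpha(n)-O(1)$. Suppose toward contradiction that $C(z\mid y)<m-t$ with $t:=\alpha(n)+c_1$ for a large absolute constant $c_1\geq c_0$. For each $v\in[N]$ let $\hat A_v:=\{w\in[M]\mid C(w\mid v)<m-t\}$; the standard counting bound gives $|\hat A_v|<2^{m-t}\leq M/D$ once $c_1\geq c_0$, and canonically padding $\hat A_v$ up to $\lceil M/D\rceil$ elements produces $A_v\in\mathcal{A}_D$. Let $B_x:=\{u\in[N]\mid C(u)\leq C(x)\}$, padded canonically to a multiple of $S$; the bound $|B_x|>2^{C(x)-2\log n}$ and $C(x)\geq s(n)$ give $|B_x|\geq S$ after a small allowance in $c_0$. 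Call column $v$ \emph{bad} if more than a $2m^2/D$ fraction of cells in $B_x\times\{v\}$ is $A_v$-coloured. Property~(a) of rainbow balancing, applied to $B_x$ and any $S$ bad columns with their colour lists $(A_{v_1},\ldots,A_{v_S})$, forces the number of bad columns to be strictly less than $S$.

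The contradiction is driven by the claim that $y$ is good. Assuming so, the $A_y$-coloured cells in $B_x\times\{y\}$ contain $(x,y)$ (since $z=T_n(x,y)\in A_y$ by the contradiction assumption) and number at most $2m^2|B_x|/D$, hence describing $x$ from $y$ by its rank in this enumerable set yields
\[
C(x\mid y)\leq C(x)-\alpha(n)-c_1+O(\log n),
\]
where the $O(\log n)$ accounts for self-delimited encoding of $C(x)\in[0,n]$ and the $2\log m$ from the $2m^2$ slack in the balancing bound. All remaining inputs to the decoder (namely $n$, $t$, $c_1$, $c_0$, and $T_n$) are recoverable from $|y|=n$ via an $O(1)$-size program. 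Comparing with hypothesis~(ii), $C(x\mid y)\geq C(x)-\alpha(n)$, we deduce $c_1\leq O(\log n)$, and hence $C(z\mid y)\geq m-\alpha(n)-O(\log n)$; by symmetry $C(z\mid x)\geq m-\alpha(n)-O(\log n)$.

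The main obstacle is to sharpen this crude $O(\log n)$ penalty to the $O(1)$ demanded by the theorem's conclusion; this is the ``tighter analysis'' alluded to in Section~\ref{s:proofoverview}. My plan is to exploit the $7\log n$ buffer in the hypothesis $s(n)\geq\alpha(n)+7\log n+O(1)$ (together with $m(n)\leq s(n)-7\log n$) by inflating $D$ to $2^{\alpha(n)+c_0}\cdot n^{c_2}$ for a suitable constant $c_2$: Lemma~\ref{l:tables}'s numerical condition still holds because the extra $c_2\log n$ in $\log D$ is absorbed by the $7\log n$ slack, while the enlarged $D$ saves an extra $c_2\log n$ in the rank bound, exactly cancelling the $\log n$ cost of encoding $C(x)$ and leaving only an absolute additive constant. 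Choosing $c_1$ larger than this residual constant then produces a genuine contradiction. The delicate points to verify are that the canonical padding of $\hat A_v$ up to $\mathcal{A}_D$ remains legitimate after the $n^{c_2}$-inflation of $D$ (which requires a correlated adjustment of the upper cardinality bound in Definition of $\mathcal{A}_D$ via the $m^2$ slack), and that the enumeration of $A_y$-coloured cells in $B_x\times\{y\}$ is uniform in $C(x)$ and $c_1$ so that no further per-$n$ overhead leaks in; both follow from the Chernoff-based proof of Lemma~\ref{l:tables} with minor modifications of the parameters.
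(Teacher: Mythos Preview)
Your architecture matches the paper's, and you correctly locate the sole obstacle: reducing the loss from $O(\log n)$ to $O(1)$. However, your proposed fix---inflating $D$ by $n^{c_2}$---cannot work, for a structural reason. The balancing bound $2m^2|B_1||B_2|/D$ in Definition~\ref{d:tables} has its factor $m^2$ coming precisely from the upper size cap $(M/D)m^2$ on members of $\mathcal{A}_D$; enlarging $D$ therefore shrinks both the rank bound \emph{and} the admissible sizes in $\mathcal{A}_D$ by the same factor. With $t=\alpha(n)+c_1$ held constant, $|\hat A_v|$ ranges up to $2^{m-t}-1$, while after inflation the cap $(M/D')m^2$ drops below $2^{m-t}$ as soon as $c_2\log n$ exceeds $2\log m+O(1)$, so $\hat A_v\notin\mathcal{A}_{D'}$ and your ``pad up to $\lceil M/D'\rceil$'' step is impossible. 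Widening the cap to readmit $\hat A_v$ reinflates the rank bound by the same amount; raising $t$ to shrink $\hat A_v$ weakens the conclusion back to $m-\alpha(n)-O(\log n)$. A direct calculation shows the cap constraint forces $c_2\leq 2\log m/\log n+o(1)$, while cancelling the $\log n$ cost of encoding $C(x)$ (on top of the $2\log m$ slack already present in the rank) requires $c_2\geq 1+2\log m/\log n-o(1)$; these are incompatible for every admissible $m$, so no choice of $c_2$ and no ``correlated adjustment of the $m^2$ slack'' closes the gap.

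The paper's actual device is different: it sets $D=2^{\alpha(n)+C+2\log m}$ so that $\mathcal{A}_D$ exactly brackets $|\hat A_v|\in[2^{m-t-2\log m},2^{m-t})$ with no padding, obtains rank $\leq 2^{t_1+2-t}$ where $t_1=C(x)$, and then writes the rank on \emph{exactly} $t_1+2-t$ bits. Since $t=\alpha(n)+C$ is computable from $n=|y|$ and the constant $C$, the decoder recovers $t_1$ from the \emph{length} of this rank string, so $t_1$ is never encoded separately and the $\log n$ cost disappears. This yields $C(x\mid y)\leq t_1-t+\log C+2\log\log C+O(1)$ with the $O(1)$ independent of $C$, and choosing $C$ large enough contradicts $C(x\mid y)\geq t_1-\alpha(n)$. (A secondary issue: with your choice $S=2^{s(n)}$ a bad column $v$ is only forced to have $C(v)<s(n)+O(\log n)$, since enumerating bad columns still needs $t_1$; this does not exclude $y$. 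The paper takes $S=2^{s(n)-3\log n}$ to absorb that overhead.)
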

\begin{proof}
The construction depends on a constant $C$ that will be determined later. Let $s = \lfloor s(n) - 3 \log n \rfloor$, $S = 2^s$, $D = 2^{\alpha(n) + C + 2 \log m}$ and $t = \alpha(n) + C$.

By Lemma~\ref{l:tables} there exists $T: [N] \times [N] \mapping [M]$ an $(S,D)$-rainbow balanced table.
We consider the smallest (in some canonical order) such table $T$ and define $E(x,y)$ to be $T(x,y)$. Thus, the table $T$ can be described with $\log n + O(1)$ bits.

Let us fix $x$ and $y$ with $C(x) = t_1 \geq s(n)$, $C(y) = t_2 \geq s(n)$ and ${\rm dep}(x,y) \leq \alpha(n)$.

Let $z = T(x,y)$. We prove that $C(z \mid y) \geq m - \alpha(n) - C = m-t$ and $C(z \mid x) \geq m - \alpha(n) - C = m-t$. Actually we show just the first relation (the second one is similar).

Suppose $C(z \mid y) < m-t$.

Let $B_1 = \{u \in \zon \mid C(u) \leq t_1\}$ and $B_2 = \{v \in \zon \mid C(v) \leq t_2\}$.
We have $|B_1| < 2^{t_1+1}$ and $|B_2| \leq 2^{t_2+1}$. Take supersets $B_1' \supseteq B_1$ and
$B_2' \supseteq B_2$ with $|B_1'| = 2^{t_1+1}$ and $|B_2'| = 2^{t_2+1}$ (and $B_1'$, $B_2' \subseteq [N]$). Note that the sizes of
$B_1'$ and $B_2'$ are exact multiples of $S$.

For each $v \in \zo^n$, let $A_v = \{w \in \zo^m \mid C(w \mid v) < m-t\}$. Note that
$2^{m-t - 2\log m} \leq |A_v| < 2^{m-t}$ and thus
$M/D \leq |A_v| \leq M\cdot m^2/D$.
In other words, for all $v \in \zon$, $A_v \in \calA_D$.

We say that $v \in \zon$ is a \emph{bad column} if the number of cells in $B_1 \times \{v\}$ that are $A_v$-colored is at least $2 \cdot \frac{|B_1'|}{2^t}$.

Since $B_1 \subseteq B_1'$, if $v$ is a bad column, the number of $A_v$-colored cells in $B_1' \times \{v\}$ is also at least $2 \cdot \frac{|B_1'|}{2^t}$. It follows that the number of bad columns is less than $S$. Otherwise, there would be $S$ columns $v_1, \ldots, v_S$ that fail to satisfy (a) in Definition~\ref{d:tables} for $B_1'$ and the tuplet of colors $(A_{v_1}, \ldots, A_{v_S})$, and this is not possible because the table $T$ is rainbow balanced. 

The set of bad columns can be enumerated if we are given $t_1$, $m-t$ and the table $T$. Therefore, if $v$ is a bad column, then $v$ can be described by its rank in the enumeration of the bad columns and by the information needed for the enumeration. Note that from $n$, we can calculate the table $T$ and $m-t$.  Therefore,
\[
\begin{array}{ll}
C(v) & \leq \log(S) + \log(t_1)  + \log n + 2\log \log t_1 +  2 \log \log n + O(1) \\
& < s + 3 \log n.
\end{array}
\]
Since $C(y) \geq s(n) = s + 3 \log n$, $y$ is a good column.

Let $G$ be the positions in the strip $B_1 \times \{y\}$ that are $A_y$-colored. Formally,
$G = {\rm proj}_1(T^{-1}(A_y) \cap (B_1 \times \{y\})$.
By assumption, $x$ belongs to the set $G$.
Since $y$ is a good column, 
\[
|G| \leq 2 \frac{|B_1'|}{2^t} = \frac{2^{t_1+2}}{2^t}.
\]
The set $G$ can be enumerated given $y$, $t_1$, $m-t$ and the table $T$. Thus, given $y$, $x$ can be described by its rank in the enumeration of $G$ and by the information needed for the enumeration. This information is given as follows. We give the constant $C$ and the rank of $x$ written on exactly $t_1+2-t$. Note that from $y$, whose length is $n$, we can calculate the table $T$ and $m$ and $t$.  Thus, from the given information, we can reconstruct $t_1$. 
Therefore,
\[
\begin{array}{ll}
C(x\mid y) & \leq t_1+2 -t + \log C + 2 \log \log C + O(1) \\
& < t_1 - t + \log C + 2 \log \log C + O(1),
\end{array}
\]
where the constant in $O(1)$ does not depend on $C$.
On the other hand, since $x$ and $y$ are at most $\alpha(n)$-dependent,
\[
C(x\mid y) \geq t_1 - \alpha(n).
\]
Combining the last two inequalities, it follows that $t < \alpha(n) + \log C + 2 \log \log C + O(1)$, which contradicts that $t=\alpha(n) + C$. (for an appropriate choice of $C$)~\qed
\end{proof}

\section{Impossibility of independence amplification}

The dependence of strings $x$ and $y$ is given by ${\rm dep}(x,y) = C(x) + C(y)  - C(xy)$. The smaller ${\rm dep}(x,y)$ is, the more independent the strings $x$ and $y$ are. Thus, amplifying independence amounts to reducing dependence. An effective dependence reducer would consist of two computable functions $f_1$ and $f_2$ that for two functions $\alpha(n) > \beta (n)$ guarantee that for all $x,y$ of length $n$,
\begin{equation}
\label{e:reducer}
\mbox{dep}(x,y) \leq \alpha(n) \Rightarrow \mbox{dep}(f_1(x,y), f_2(x,y)) \leq \beta(n).
\end{equation}

Note that, since ${\rm dep}(u,v) \leq \beta(n)$ whenever $C(u) \leq \beta(n)$ or $C(v) \leq \beta(n)$, dependency reduction would be achieved by two functions that simply output strings with Kolmogorov complexity $\leq \beta(n)$. To avoid this trivial and non-interesting type of dependency reduction, we require that, in addition to requierement~(\ref{e:reducer}), $C(f_1(x,y)) \succeq \beta(n)$ and $C(f_2(x,y)) \succeq \beta(n)$. More precisely, we seek two computable functions 
$f_1 : \zon \times \zon \mapping \zo^{l(n)}$ and $f_2 : \zon \times \zon \mapping \zo^{l(n)}$ that satisfy the following DEPENDENCY REDUCTION TASK.
\medskip

\fbox{
\parbox{11cm}{
\vbox{
DEPENDENCY REDUCTION TASK for parameters $ \alpha(n), \beta(n)$,  $s(n)$, $l(n)$, and $a$.
\smallskip

For all $x \in \zon$, $y \in \zon$ with ${\rm dep}(x,y) \leq \alpha(n)$, $C(x) \geq s(n)$ and $C(y) \geq s(n)$ the following should hold:
\begin{enumerate}
	\item ${\rm dep}(f_1(x,y), f_2(x,y)) \leq \beta(n)$,
	\item $C(f_1(x,y)) \geq \beta(n) + a \cdot \log n$ and $C(f_2(x,y)) \geq \beta(n) + a \cdot \log n$.
\end{enumerate}
}}}
\smallskip

We show that effective independence amplification is essentially impossible.
\begin{theorem}
\label{t:impossamplific}
Let $\alpha(n)$ be a function such that $\alpha(n) \leq n/2 - 5\log n$ and let $\beta(n) = \alpha(n) - \log n - 3 \log \alpha(n)$. Let $s(n)$ be a function such that $s(n) \leq n - \alpha(n) - 2 \log n - O(1)$ and let $l(n)$ be a function such that $l(n) \geq \beta(n) + 8 \log n$.

There are no computable functions $f_1: \zon \times \zon \mapping \zo^{l(n)}$ and $f_2: \zon \times \zon \mapping \zo^{l(n)}$ satisfying the DEPENDENCY REDUCTION TASK for parameters $\alpha(n)$, $\beta(n)$,  $s(n)$, $l(n)$ and $a=8$.

\end{theorem}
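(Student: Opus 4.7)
The plan is to derive a contradiction by composing any hypothetical dependency reducer with the optimal Kolmogorov extractor from Theorem~\ref{t:extractor}, producing a ``super-extractor'' whose existence is ruled out by Theorem~\ref{t:twodependentsources}. Concretely, assume for contradiction that computable functions $f_1, f_2 : \zo^n \times \zo^n \to \zo^{l(n)}$ satisfy the DEPENDENCY REDUCTION TASK. Instantiate Theorem~\ref{t:extractor} on input length $l(n)$, with complexity threshold $s'(n) = \beta(n) + 8 \log n$ and dependency threshold $\beta(n)$, obtaining a computable extractor $E : \zo^{l(n)} \times \zo^{l(n)} \to \zo^{m}$ with $m \le s'(n) - 7 \log l(n)$; the hypothesis $l(n) \ge \beta(n) + 8 \log n$ guarantees that the parameter constraints of Theorem~\ref{t:extractor} are met. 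Set $g(x,y) = E(f_1(x,y), f_2(x,y))$, a computable function from $\zo^n \times \zo^n$ to $\zo^{m}$.

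For any $(x,y)$ that satisfies the TASK hypothesis, the two TASK conclusions ${\rm dep}(f_1(x,y), f_2(x,y)) \le \beta(n)$ and $C(f_i(x,y)) \ge \beta(n) + 8 \log n = s'(n)$ are precisely what is needed to invoke Theorem~\ref{t:extractor} on $(u,v) = (f_1(x,y), f_2(x,y))$. This yields
\[ C(g(x,y)) \ \geq \ C(E(u,v) \mid u) \ \geq \ m - \beta(n) - O(1). \]
It remains to produce a specific pair $(x,y)$ satisfying the TASK hypothesis and simultaneously having $C(g(x,y))$ much smaller than $m - \beta(n)$. For this, apply Theorem~\ref{t:twodependentsources} to the computable function $g$ with a parameter $\alpha^{*}$ chosen as close to $\alpha(n)$ as allowed, namely $\alpha^{*} = \alpha(n) - 2 \log n - O(1)$, while verifying $\alpha^{*} \le m$ under the stated numerical conditions. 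This supplies $x,y \in \zo^n$ with $C(x \mid y), C(y \mid x) \ge n - \alpha^{*} - 2 \log n$ and $C(g(x,y)) \le m - \alpha^{*} + \log n + 2 \log \alpha^{*} + O(1)$.

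From the conditional-complexity bound and the hypothesis $s(n) \le n - \alpha(n) - 2 \log n - O(1)$, one gets $C(x), C(y) \ge s(n)$; combining the trivial bound $C(x), C(y) \le n + O(1)$ with the same conditional-complexity bound yields ${\rm dep}(x,y) \le \alpha^{*} + 2 \log n + O(1) \le \alpha(n)$. Hence $(x,y)$ satisfies the TASK hypothesis, so the lower bound from step two applies to this very pair. Combining the two bounds on $C(g(x,y))$ gives $\alpha^{*} - \beta(n) \le \log n + 2 \log \alpha^{*} + O(1)$, and substituting the definition $\beta(n) = \alpha(n) - \log n - 3 \log \alpha(n)$ together with $\alpha^{*} = \alpha(n) - O(\log n)$ contradicts the defining gap for $n$ large enough. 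The main obstacle is the careful bookkeeping of additive logarithmic terms: the $2 \log n$ slack in the dependency guarantee of Theorem~\ref{t:twodependentsources}, the $\log n + 2 \log \alpha^{*}$ slack in its complexity bound, the $O(1)$ loss in Theorem~\ref{t:extractor}, and the $\log l(n)$ appearing in the extractor's permissible output length must be tracked tightly so that their sum stays strictly below the stated gap $\alpha(n) - \beta(n) = \log n + 3 \log \alpha(n)$.
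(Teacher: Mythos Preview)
Your approach is exactly the paper's: compose the hypothetical reducer with the extractor of Theorem~\ref{t:extractor} and then invoke Theorem~\ref{t:twodependentsources} to produce a pair $(x,y)$ on which the composite map has complexity both large (from the extractor) and small (from Theorem~\ref{t:twodependentsources}). So the strategy is fine.

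The gap is in the final arithmetic. You shift to $\alpha^{*}=\alpha(n)-2\log n-O(1)$ in order to make ${\rm dep}(x,y)\le\alpha(n)$ rigorous, and then assert that the resulting inequality still contradicts the defining gap. It does not. Substituting your own values,
\[
\alpha^{*}-\beta(n)=\bigl(\alpha(n)-2\log n-O(1)\bigr)-\bigl(\alpha(n)-\log n-3\log\alpha(n)\bigr)=3\log\alpha(n)-\log n-O(1),
\]
while the right-hand side of your combined inequality is $\log n+2\log\alpha^{*}+O(1)\le \log n+2\log\alpha(n)+O(1)$. The inequality $\alpha^{*}-\beta(n)\le \log n+2\log\alpha^{*}+O(1)$ therefore reduces to $\log\alpha(n)\le 2\log n+O(1)$, which is \emph{always true} since $\alpha(n)\le n/2$. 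No contradiction arises. The $2\log n$ you paid to pass from $\alpha(n)$ to $\alpha^{*}$ is exactly what kills the argument, because the total budget $\alpha(n)-\beta(n)=\log n+3\log\alpha(n)$ cannot absorb an additional $2\log n$ together with the $\log n+2\log\alpha$ slack coming from Theorem~\ref{t:twodependentsources}.

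The paper sidesteps this by applying Theorem~\ref{t:twodependentsources} with parameter $\alpha(n)$ itself (not a shifted $\alpha^{*}$) and choosing the extractor's output length $m_E(n)=\alpha(n)$. This yields the upper bound $C(g(x,y))\le \log n+2\log\alpha(n)+O(1)$ directly, against the extractor's lower bound $\log n+3\log\alpha(n)-O(1)$, and the contradiction is $\log\alpha(n)\le O(1)$. If you want to be as careful about ${\rm dep}(x,y)\le\alpha(n)$ as you were, you must either tighten the dependency estimate (e.g.\ note that the $(x,y)$ produced in the proof of Theorem~\ref{t:twodependentsources} actually has $C(x),C(y)$ close to $n-\alpha$, not $n$, so the $2\log n$ loss is avoidable), or else accept that the stated $\beta(n)$ would need an extra $2\log n$ of slack for your version of the bookkeeping to go through.
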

\begin{proof}
Suppose there exist two computable functions $f_1$ and $f_2$ satisfying the DEPENDENCY REDUCTION TASK for the given parameters and let $f(x,y) = E(f_1(x,y), f_2(x,y))$, where $E: \zo^{l(n)} \times \zo^{l(n)} \mapping \zo^{m_E(n)}$ is the Kolmogorov extractor from Theorem~\ref{t:extractor} for parameters $m_E(n) = \alpha(n)$, $s_E(n) = \beta(n)+ 8 \log n$ and dependency $\alpha_E(n) = \beta(n)$.
Theorem~\ref{t:twodependentsources} promises two strings $x$ and $y$ in $\zon$ such that $C(x \mid y) \geq s(n)$, $C(y \mid x) \geq s(n)$ and $C(f(x,y)) \leq m_E(n) - \alpha(n) + \log n + 2 \log \alpha (n) + O(1) = \log n + 2 \log \alpha(n) + O(1)$. Note that ${\rm dep}(x,y) \leq \alpha(n)$.

Let $u = f_1(x,y), v = f_2(x,y)$. The assumption implies that $C(u) \geq s_E(n)$, $C(v) \geq s_E(n)$ and ${\rm dep}(u,v) \leq \alpha_E(n)$.
 The extractor $E$ guarantees that  $C(E(u,v)) \geq m(n) - \alpha_E(n) - O(1) = \alpha(n) - (\alpha(n) - \log n - 3 \log \alpha(n)) - O(1) = 3 \log \alpha(n) + \log n - O(1)$. Since $E(u,v) = f(x,y)$, this is in conflict with the previous inequality.~\qed

\end{proof}

\if01
Our next theorem shows that no function $f$ can amplify independence from $\alpha$ to less than $\alpha - \log n - O(1)$.
\begin{theorem}
\label{t:noindepamplific}
Let $f: \zo^n \times \zo^n \mapping \zo^m$ be a uniform function in $n$ and let $\alpha \in \nat$, $\alpha \leq m$. For every $x \in \zo^n$ there exists $y \in \zo^n$ such that
\[
\begin{array}{ll}
n - C(y \mid x) & \leq \alpha  \\
m- C(f(x,y) \mid x) & \geq \alpha - \log n - O(1).
\end{array}
\]
\end{theorem}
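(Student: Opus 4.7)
The plan is a direct pigeonhole/counting argument applied to the section $g(y) := f(x,y)$ of the function $f$ obtained by fixing $x$. Since $f$ is uniformly computable in $n$, the function $g: \zon \mapping \zo^m$ is computable given $x$ alone, and the argument below works for every $x \in \zon$ (with no existential choice of $x$ needed). The core idea is to sort the elements of $\zo^m$ by the size of their $g$-preimage, take the top $2^{m-\alpha}$ values $Z$, observe that $g^{-1}(Z)$ is large, and extract from it a $y$ with high $C(y\mid x)$ whose image $g(y)$ automatically has small $C(\cdot\mid x)$.

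Concretely, I would let $z_1, z_2, \ldots$ be the enumeration of $\zo^m$ in decreasing order of $|g^{-1}(\cdot)|$, ties broken lexicographically, and set $K = 2^{m-\alpha}$, $Z = \{z_1, \ldots, z_K\}$. First I would verify that $|g^{-1}(Z)| \geq 2^{n-\alpha}$: the average preimage size is $2^{n-m}$, and since $z_1,\ldots,z_K$ are the top $K$ preimage sizes their average is at least the overall average, so their sum is at least $K \cdot 2^{n-m} = 2^{n-\alpha}$. Next, since there are strictly fewer than $2^{n-\alpha}$ strings $y$ with $C(y \mid x) < n - \alpha$, the set $g^{-1}(Z)$ must contain some $y$ with $C(y \mid x) \geq n - \alpha$. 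This $y$ immediately gives $n - C(y \mid x) \leq \alpha$, the first claim.

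For the second claim, fix this $y$ and set $z = f(x,y) = g(y) \in Z$, so $z = z_r$ for some rank $r$ with $1 \leq r \leq K = 2^{m-\alpha}$. Given $x$ one recovers $n$ and (since $f$ is uniform in $n$) the whole function $g$, hence the canonical popularity ordering of $\zo^m$. Thus a description of $z$ relative to $x$ consists of $\alpha$ together with the rank $r$. Encoding $\alpha$ in fixed width $\lceil \log n \rceil$ and then $r$ on exactly $m-\alpha$ bits yields a self-parsing description of total length $m - \alpha + \log n + O(1)$, so $C(f(x,y) \mid x) \leq m - \alpha + \log n + O(1)$, which is equivalent to $m - C(f(x,y) \mid x) \geq \alpha - \log n - O(1)$.

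The main subtlety, rather than a substantive obstacle, is the accounting of the $\log n + O(1)$ additive term: one must commit to a single canonical rule (popularity with lexicographic tie-breaking) so that the rank $r$ is well-defined from $x$, and one must encode $\alpha$ in a way whose overhead is absorbed into $\log n + O(1)$, which is routine given that $\alpha \leq m$ in the intended regime $m = O(n)$. No probabilistic arguments, no use of the rainbow-balanced tables of Lemma~\ref{l:tables}, and no appeal to the previous impossibility theorems are required; the pigeonhole over the top $2^{m-\alpha}$ popularity classes is the entire content of the proof.
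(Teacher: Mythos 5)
Your proof is correct, and it takes a slightly but genuinely different route from the paper's. The paper first handles the base case $m=\alpha$ by locating the single most popular value $z$ of $y\mapsto f(x,y)$ (so $C(z\mid x)=O(\log n)$, and $z$ has $\geq 2^{n-\alpha}$ preimages, one of which has $C(y\mid x)\geq n-\alpha$), and then reduces the general case $m>\alpha$ to this by passing to the length-$\alpha$ prefix $g(x,y)$ and gluing the remaining $m-\alpha$ suffix bits on explicitly via $C(f(x,y)\mid x)\leq C(g(x,y)\mid x)+(m-\alpha)+O(1)$. You instead skip the prefix decomposition entirely: you take the top $K=2^{m-\alpha}$ elements of $\zo^m$ by preimage popularity, use the averaging observation that the top-$K$ preimage sizes sum to at least $K\cdot 2^{n-m}=2^{n-\alpha}$, and pigeonhole a good $y$ into that union; then $f(x,y)$ is recovered from $x$ by its rank $r\leq 2^{m-\alpha}$ in the popularity order, after a fixed-width encoding of $\alpha$. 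This unifies the two cases and replaces the top-$1$-of-the-prefix argument with a top-$K$-of-the-full-output argument, but the underlying counting content is the same: both compress $f(x,y)$ relative to $x$ to roughly $m-\alpha+\log n$ bits, and both use the $<2^{n-\alpha}$ bound on $\{y: C(y\mid x)<n-\alpha\}$. Your bookkeeping is fine provided $\alpha\leq m(n)$ and $m$ is computable from $n$ (so that $\alpha$ fits in $\lceil\log n\rceil$ bits and the popularity order is computable from $x$), which matches the paper's standing assumptions.
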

\begin{proof}
Suppose first that $m = \alpha$. 
Fix $x \in \zo^n$. Take $z$ the most popular element in $A = \{f(x,y) \mid y \in \zo^n\}$.
Then $C(z \mid x) < \log n + O(1)$.
Also $|f^{-1}(z) \cap A| \geq 2^{n-\alpha}$. 
Therefore there exists $y$ with $(x,y) \in g^{-1}(z)$ with $C(y \mid x) \geq n-\alpha$.

If $m > \alpha$, let $g(x,y)$ be the prefix of length $\alpha$ of $f(x,y)$. Observe that $m- C(f(x,y) \mid x) \geq \alpha - C(g(x,y)\mid x) - O(1)$, because $C(f(x,y)\mid x) \leq C(g(x,y)\mid x) + (m-\alpha) + O(1)$. Since for every $x$, there exists $y$ with $C(y\mid x) \geq n- \alpha$ and  $C(g(x,y)\mid x) \leq \log n + O(1)$ (by the first part of the proof), the conclusion follows.~\qed

\end{proof}
\begin{theorem}
Let $f_1, f_2 : \zo^n \times \zo^n \mapping \zo^m$, two uniform functions in $n$ and let $\alpha \in \nat$, $\alpha \leq m$. Then there exists a pair of strings $x \in \zo^n, y \in \zo^n$ such that
\[
\begin{array}{ll}
n - C( y \mid x ) & \leq \alpha  \\
m- C(f_1(x,y) \mid f_2(x,y)) & \geq \alpha - \log n - O(1).
\end{array}
\] 

\end{theorem}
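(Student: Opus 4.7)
My plan is to adapt the single-function pigeonhole argument in the proof of Theorem~\ref{t:noindepamplific} to the two-function setting by carrying out the pigeonhole inside the fibers of $f_2$ rather than at a fixed $x$. First I handle the clean case $m = \alpha$; the general case $m > \alpha$ then follows from the same length-$\alpha$ prefix reduction used in that earlier proof.

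Assume $m = \alpha$. For each $v \in \zo^m$ let $w^*(v) \in \zo^m$ be the lexicographically smallest $w$ maximizing $|S_{v,w}|$, where $S_{v,w} = \{(x,y) \in \zon \times \zon : f_1(x,y) = w \text{ and } f_2(x,y) = v\}$. Because $f_1$ and $f_2$ are fixed uniform (computable) ensembles, the only additional input needed to compute $w^*(v)$ from $v$ is $n$ (the length $|v| = m$ does not in general determine $n$), so $C(w^*(v) \mid v) \leq \log n + O(1)$. Set $H = \{(x,y) : f_1(x,y) = w^*(f_2(x,y))\}$. Pigeonholing $f_1$-values within each $f_2$-fiber gives $|S_{v,w^*(v)}| \geq |f_2^{-1}(v)|/2^m$, and summing over $v$ yields $|H| \geq 2^{2n}/2^m = 2^{2n-m}$.

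Next I project $H$ onto its $x$-coordinate. Since the $x$-coordinate takes at most $2^n$ values, some $x \in \zon$ has $|H_x| \geq 2^{n-m}$, where $H_x = \{y : (x,y) \in H\}$. Fewer than $2^{n-m}$ strings $y$ satisfy $C(y \mid x) < n-m$, so $H_x$ contains some $y$ with $C(y \mid x) \geq n - m$. This pair satisfies $n - C(y \mid x) \leq m = \alpha$, and because $(x,y) \in H$, we have $f_1(x,y) = w^*(f_2(x,y))$, which forces $C(f_1(x,y) \mid f_2(x,y)) \leq \log n + O(1)$, i.e., $m - \alpha + \log n + O(1)$ in this case.

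For $m > \alpha$, apply the construction above to the length-$\alpha$ prefix functions $g_i(x,y)$ of $f_i(x,y)$, $i \in \{1,2\}$, to obtain $(x,y)$ with $n - C(y \mid x) \leq \alpha$ and $C(g_1(x,y) \mid g_2(x,y)) \leq \log n + O(1)$. Since $g_2(x,y)$ is computable from $f_2(x,y)$ and only the $(m-\alpha)$-bit suffix separates $g_1(x,y)$ from $f_1(x,y)$, it follows that $C(f_1(x,y) \mid f_2(x,y)) \leq C(g_1(x,y) \mid g_2(x,y)) + (m-\alpha) + O(1) \leq m - \alpha + \log n + O(1)$. The step I expect to be the most delicate is the definition of $w^*(v)$ as a function of $v$ alone, not of $(x,v)$: an iterated per-$x$ pigeonhole, a natural alternative, would produce a $w$ that depends essentially on $x$ and so leave $C(w \mid v)$ uncontrolled. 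Carrying out the pigeonhole globally over all pairs $(x,y)$ is what makes $w^*(v)$ describable from $v$ in $\log n + O(1)$ bits, at the price of only a modest $|H|$-based averaging in the final step.
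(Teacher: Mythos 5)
Your proof is correct, but it runs along a genuinely different path from the paper's. The paper drops the conditioning entirely: it uses the trivial bound $C(g_1(x,y)\mid g_2(x,y))\leq C(g_1(x,y))+O(1)$ and then applies the single-source argument (Proposition~\ref{p:onesourceimposs}) to the prefix function $g_1$ alone, picking a pair whose $g_1$-image is \emph{unconditionally} simple, namely the globally most popular $g_1$-value. You instead pigeonhole inside each $f_2$-fiber, making $f_1(x,y)$ computable from $f_2(x,y)$ and $n$ via the per-fiber popular value $w^*(v)$, and you postpone the prefix reduction to the end. Both routes produce a set of at least $2^{2n-m}$ ``bad'' pairs and project it onto the $x$-coordinate to find some $(x,y)$ in it with $C(y\mid x)\geq n-m$. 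Your version carries a bit more machinery than strictly needed, since the unconditional bound on $g_1$ already suffices, but it buys a small concrete gain: projecting $H$ onto $x$ directly gives $n-C(y\mid x)\leq m=\alpha$ exactly, whereas applying Proposition~\ref{p:onesourceimposs} to $g_1$ viewed on $\zo^{2n}$ yields $C(xy)\geq 2n-\alpha$, which converts to $C(y\mid x)\geq n-\alpha$ only up to an $O(\log n)$ symmetry-of-information loss. Your closing remark about why $w^*$ must be a function of $v$ alone rather than of $(x,v)$ identifies exactly the right pitfall; in the paper's shortcut the corresponding observation is simply that the globally most popular $g_1$-value depends on nothing, which is why an unconditional complexity bound suffices.
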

\begin{proof}
We define the functions $g_i$, $i=1,2$ as follows: $g_i(u,v)$ is the prefix of length $\alpha$ of $f_i (u,v)$.
Then for all $u, v$, $C(f_1(u,v) \mid f_2(u,v)) \leq C(g_1(u,v) \mid g_2(u,v)) + (m-\alpha) + O(1)$. By Proposition~\ref{p:onesourceimposs},  there exist two strings $x$ and $y$ such that $n - C(y \mid x) \leq \alpha$ and $C(g_1(x,y)) < \log n + O(1)$. Therefore
$C(g_1(x,y) \mid g_2(x, y)) < \log n + O(1)$, and the conclusion follows.~\qed 
\end{proof}
\fi
\if01
\section{Number of dependent strings}

Given a string $x \in \zo^n$, and $\alpha \in \nat$, how many strings have dependency with $x$ at least  $\alpha$? That is we are interested in estimating the size of the set
\[
A_{x, \alpha} = \{y \in \zo^n \mid C(y) - C(y \mid x) \geq \alpha \}.
\]
This is the set of strings about which, roughly speaking, $x$ has at least $\alpha$ bits of information.

The set $A_{x, \alpha}$ is included in $\{ y \in \zo^n \mid C(y \mid x) < n - \alpha + c \}$ for some constant $c$, and therefore $|A_{x, \alpha}| \leq C \cdot 2^{n-\alpha}$, for $C= 2^c$.

For a lower bound of $|A_{x, \alpha}|$, the interesting case is when $x$ does not have small Kolmogorov complexity (because a low-Kolmogorov complexity string cannot have information about too many strings).
We also give $m-t$. The table $T$ can be constructed from $n$ which is the length of $y$. 
\begin{proposition} (STACS construction -done more carefully)
Let $\alpha :\nat \mapping \nat$ and $\sigma : \nat \mapping \nat$ be two computable functions such that $\sigma(n) \geq 8 \log n$, for every $n \in \nat$. Let $m(n)$ be a computable function verifying $m(n) \leq \sigma(n)/2 - 4 \log n$. There exists a computable function $f:\zo^* \times \zo^* \mapping \zo^*$ of type $(n,n, m(n))$ such that for every sufficiently large $n$ and for every two strings $x$ and $y$ satisfying:
\begin{enumerate}
	\item $|x| = |y| = n$,
	\item $C(x) > \sigma$, $C(y) > \sigma(n)$, 
	\item $(x,y)$ have dependency at most $\alpha(n)$
\end{enumerate}
it holds that $C(f(x,y) \mid x) \geq m(n) - \alpha(n) - 4 \log n$ and $C(f(x,y) \mid y) \geq m(n) - \alpha(n) - 4 \log n$.
\end{proposition}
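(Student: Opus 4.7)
The plan is to follow the balanced-table template from Section~\ref{s:proofoverview} using the simpler (non-rainbow) balancing property, since the weaker output length $m(n)\le\sigma(n)/2$ makes the stronger rainbow machinery of Lemma~\ref{l:tables} unnecessary.

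First, define a table $T\colon[N]\times[N]\mapping[M]$ (with $N=2^n$, $M=2^{m(n)}$) to be $S$-\emph{balanced} if for every pair of subsets $B_1,B_2\subseteq[N]$ of size at least $S$ and every color $a\in[M]$, the number of $a$-colored cells in $B_1\times B_2$ is at most $2|B_1||B_2|/M$. A Chernoff bound gives that a random coloring fails this condition for a fixed triple $(B_1,B_2,a)$ with probability at most $e^{-|B_1||B_2|/(3M)}$; the union bound over the at most $(eN/S)^{2S}\cdot M$ triples succeeds once $S^2/(3M)\ge 2S\log(eN/S)+\log(2M)$, which is satisfied by $S=2^{m(n)+c\log n}$ for a suitable constant $c$. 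Setting $s=\lceil m(n)+c\log n\rceil$ and $S=2^s$, an $S$-balanced table exists, and I take $T$ to be the smallest (in a canonical order) such table and set $f(x,y)=T(x,y)$. Because $S$-balance is a finite exhaustive check, $T$ is computable from $n$ alone.

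Next, fix $x,y$ satisfying the hypotheses and let $B_x=\{u\in\zon\mid C(u)\le C(x)\}$, so $|B_x|\ge 2^{\sigma(n)-2\log n}\ge S$ provided $s\le \sigma(n)-2\log n$. I prove $C(f(x,y)\mid y)\ge m(n)-\alpha(n)-4\log n$; the bound conditioned on $x$ is entirely symmetric (using $B_y$ and bad rows). Call a column $v\in\zon$ \emph{bad for color $a$} if more than $2|B_x|/M$ cells of $B_x\times\{v\}$ are colored $a$. If $S$ columns were all bad for the same $a$, their union with $B_x$ would violate $S$-balance, so fewer than $S$ columns are bad for any given $a$. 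Hence a bad column can be enumerated from $T$, $a$, $C(x)$, and $n$, giving $C(v)\le s+m(n)+O(\log n)$. Provided $\sigma(n)\ge s+m(n)+O(\log n)$, which becomes $m(n)\le\sigma(n)/2-O(\log n)$, every $y$ with $C(y)\ge\sigma(n)$ is good for every color, including $a:=f(x,y)$.

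Goodness of $y$ means that $G:=\{u\in B_x\colon T(u,y)=a\}$ has size at most $2|B_x|/M\le 2^{C(x)-m(n)+2}$, and $x\in G$. Enumerating $G$ from $y$, $T$, $a$, and $C(x)$ yields
\[
C(x\mid y)\le \log|G|+C(a\mid y)+O(\log n)\le C(x)-m(n)+C(a\mid y)+4\log n,
\]
where the $4\log n$ term collects the self-delimiting prefixes for $C(x)$ (needed to determine $|B_x|$ and hence the length of the rank field), for the program witnessing $C(a\mid y)$, and for the rank of $x$ in $G$. Since ${\rm dep}(x,y)\le\alpha(n)$ gives $C(x\mid y)\ge C(x)-\alpha(n)$ by Definition~\ref{d:indep}(a), these combine to $C(f(x,y)\mid y)\ge m(n)-\alpha(n)-4\log n$. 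The main obstacle is not conceptual but rather parameter bookkeeping: the probabilistic existence argument forces $s\ge m(n)+O(\log n)$, while the good-column argument needs $\sigma(n)\ge m(n)+s+O(\log n)$, and these two constraints together produce both the hypothesis $m(n)\le\sigma(n)/2-4\log n$ and the $4\log n$ slack in the conclusion; every $\log n$ contribution (table description, $C(x)$ encoding, prefix-free lengths, rank length) must be tracked with care but introduces no fundamental difficulty.
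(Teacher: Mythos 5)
Your proposal is correct and follows essentially the same route as the paper's proof sketch: both use the simpler (non-rainbow) balancing property on an $N\times N$ table, the bad-column counting argument, and the combination with the dependency lower bound on $C(x\mid y)$. The only differences are cosmetic parameter choices (you set $s=m(n)+c\log n$ where the paper fixes $s=\sigma/2$) and that you explicitly note $C(x)$ must be part of the bad-column description, which is a harmless bookkeeping detail elided in the paper's sketch.
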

\begin{proof}
(Sketch) Take $s = \sigma/2$ and $m = \sigma/2 - 4 \log n$. Build a table $T$ that is $(S,M)$-balanced (according to def in the STACS paper). $s$ and $m$ satisfy the requirements for the construction of the table.

A bad column $u$ can be described by: color: $m$ bits; index among bad columns: $s$ bits; table $T$: $\log n$ bits.
So if $u$ is bad, then $C(u) \leq s+m+2\log s +\log n +2\log\log n +O(1) \leq s+m + 4 \log n = \sigma$.

So $y$ is a good column.

Then $x$, given $y$ can be described by: $C(z\mid y)$ bits; index $t_1-m +2$ bits (because index of $x$ is $\leq 2/2^m \cdot 2^{t_1+1} = 2^{t_1-m+2}$; table $T$: $\log n$ bits.

So, $C(x \mid y) \leq C(z \mid y) + (t_1 - m) + 2 \log(t_1-m) + \log n + 2 \log \log n +O(1) \leq C(z\mid y) + t_1 - m + 4 \log n$.

But $C(x\mid y) \geq t_1 - \alpha$. So, $t_1 - \alpha \leq C(z \mid y) + t_1 - m + 4 \log n$, and, therefore,
$C(z \mid y) \geq m - \alpha - 4 \log n$.~\qed

\end{proof}

\begin{theorem} Let $\alpha$ and $\sigma$ be natural constants satisfying $\alpha < n/2 - 8 \log n$, $n \geq \sigma > 2 \alpha  + 16 \log n$. Let $x \in \zo^n$ with $C(x) > \sigma$. Then $|A_{x, \alpha}| \geq (1/n^{6}) \cdot 2^{n-\alpha} - n^{16} 2^{2\alpha}$, provided $n$ is sufficiently large. Moreover,
$|\{ y \in \zo^n \mid C(y) \geq \sigma, C(y) - C(y \mid x) \geq \alpha \}| \geq (1/n^{6}) \cdot 2^{n-\alpha} - 2^{\sigma}$, provided $n$ is sufficiently large.

\end{theorem}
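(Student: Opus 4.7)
The plan is to explicitly construct roughly $2^{n-\alpha}/n^{6}$ distinct strings in $A_{x,\alpha}$ using the Kolmogorov extractor of Theorem~\ref{t:extractor}. Pick $\alpha^{*} := \alpha + c\log n$ for a sufficiently large constant $c$ (tuned so that the final count is $2^{n-\alpha}/n^{6}$). For each $r \in \zo^{n - \alpha^{*}}$, pad it to $r' := r\cdot 0^{\alpha^{*}} \in \zo^{n}$ and define
\[
y_r := r \cdot E(x, r') \in \zo^n,
\]
where $E$ is the extractor of Theorem~\ref{t:extractor} instantiated with output length $\alpha^{*}$, dependency tolerance $\alpha_E = O(\log n)$, and source threshold $s_E = \alpha^{*} + 7\log n$. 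The hypothesis $C(x) \geq \sigma > 2\alpha + 16\log n$ comfortably exceeds $s_E$. Because the first $n - \alpha^{*}$ bits of $y_r$ recover $r$, the map $r \mapsto y_r$ is injective, giving $2^{n-\alpha^{*}} = 2^{n-\alpha}/n^{c}$ distinct candidates.

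Call $r$ \emph{good} if $C(r) \geq s_E - O(\log n)$ and $\mathrm{dep}(x,r') \leq \alpha_E$. Standard counting bounds show at most $n^{O(1)}2^{\alpha} + 2^{n-\alpha^{*}}/n^{\Omega(1)}$ values of $r$ are bad, and both terms are dominated by $n^{16}2^{2\alpha}$ once $\sigma > 2\alpha + 16\log n$. For every good $r$, Theorem~\ref{t:extractor} delivers $C(E(x,r') \mid r') \geq \alpha^{*} - O(\log n)$, and because $r$ and $r'$ are inter-computable at cost $O(\log n)$, the same lower bound holds when conditioning on $r$ instead of $r'$. Then $C(y_r \mid x) \leq |r| + O(\log n) = n - \alpha^{*} + O(\log n)$ is immediate (given $x$, describe $r$ and recompute the suffix), while Symmetry of Information part~(b) yields
\[
C(y_r) \geq C(r) + C(E(x,r')\mid r) - O(\log n) \geq (n-\alpha^{*}) + (\alpha^{*} - O(\log n)) - O(\log n) = n - O(\log n).
\]
Subtracting gives $C(y_r) - C(y_r\mid x) \geq \alpha^{*} - O(\log n) \geq \alpha$, so each good $y_r$ lies in $A_{x,\alpha}$.

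Summing over good $r$ produces the first bound $|A_{x,\alpha}| \geq 2^{n-\alpha}/n^{6} - n^{16}2^{2\alpha}$. For the ``moreover'' clause, each good $y_r$ satisfies $C(y_r) \geq n - O(\log n) \geq \sigma$ in the non-vacuous regime $\sigma \leq n - O(\log n)$, so discarding the at most $2^{\sigma+1}$ strings $y \in \zo^n$ with $C(y) < \sigma$ yields $|\{y : C(y) \geq \sigma,\; C(y) - C(y\mid x) \geq \alpha\}| \geq 2^{n-\alpha}/n^{6} - 2^{\sigma}$ after absorbing polylog slack into $2^{\sigma}$. The main obstacle is the unconditional lower bound $C(y_r) \geq n - O(\log n)$: since a uniform single-source extractor cannot guarantee randomness in the $\alpha^{*}$-bit suffix of $y_r$ (by Theorem~\ref{t:nonextractablex}), we genuinely need the two-source Kolmogorov extractor's \emph{conditional-on-$r'$} output-complexity guarantee, lifted through Symmetry of Information to the unconditional complexity of $y_r = r \cdot E(x,r')$. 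The remaining work is just a careful match of the polylog constants against the theorem's explicit $1/n^{6}$, $n^{16}2^{2\alpha}$, and $2^{\sigma}$.
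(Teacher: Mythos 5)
Your approach is genuinely different from the paper's. The paper's proof is a short pigeonhole argument: it fixes the two-source Kolmogorov extractor $f$ from Theorem~\ref{t:extractor} (with $m=\alpha+6\log n$), takes $z$ to be the most popular value in $\{f(x,y)\mid y\in\zo^n\}$, observes that $C(z\mid x)<\log n+O(1)$ while $z$ has at least $2^{n-m}$ preimages $y$, and then notes that the extractor's conditional-complexity guarantee forbids any such $y$ with $C(y)\ge\sigma$ from being $\alpha$-independent of $x$. Hence those $2^{n-m}$ preimages split into $\{y : C(y)<\sigma\}$ (at most $2^\sigma$ of them) and $\alpha$-dependent strings; subtraction gives the bound. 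You instead construct explicit witnesses $y_r=r\cdot E(x,r')$ and verify membership in $A_{x,\alpha}$ directly. Both routes lean on the same two-source extractor, but the paper gets the count almost for free from one symmetry-of-output argument, while your construction has to thread a Symmetry-of-Information estimate through each $y_r$. The constructive route is arguably more informative (it exhibits a concrete large family) but is more delicate, which is where a real gap shows up.

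The gap: you define ``good'' $r$ by $C(r)\ge s_E-O(\log n)=\alpha^*+O(\log n)$, which is exactly what's needed for the extractor's hypothesis on the second source, but then in the displayed chain you substitute $C(r)\ge n-\alpha^*$ to conclude $C(y_r)\ge n-O(\log n)$. Those are not the same: for $\alpha^*<n/2$ we have $\alpha^*+O(\log n)\ll n-\alpha^*$. Carrying the weaker bound through gives only
\[
C(y_r)-C(y_r\mid x)\;\ge\;C(r)+2\alpha^*-n-O(\log n)\;\ge\;3\alpha^*-n-O(\log n),
\]
which is positive only when $\alpha^*\succeq n/3$, so the argument as written does not cover the regime $\alpha\ll n/3$ that the theorem explicitly allows ($\alpha<n/2-8\log n$). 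The fix is to strengthen ``good'' to additionally require $C(r)\ge |r|-d\log n=n-\alpha^*-d\log n$ for a suitable constant $d$; this re-enables your final chain, and the extra bad $r$'s incurred number at most $2^{n-\alpha^*-d\log n+1}$, still a $n^{-\Omega(1)}$ fraction of the $2^{n-\alpha^*}$ candidates, so the count survives. With that repair (and the usual matching of the hidden $O(\log n)$ constants against the choice of $c$ in $\alpha^*=\alpha+c\log n$ and against the theorem's $n^6$, $n^{16}2^{2\alpha}$, $2^\sigma$), your constructive argument goes through; it is simply a different and somewhat more hands-on route to the same count.
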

\begin{proof}
We prove the second assertion (the first one follows from the second one by taking $\sigma = 2\alpha + 16 \log n$).

Let $m = \alpha + 6 \log n$. By the previous proposition, if $C(y) \geq \sigma$ and $y \in \overline{A_\alpha}$, then it is possible from $x$ and $y$ to effectively construct a string $z$ of length $m$ with $C(z \mid x) \geq m - \alpha -  4 \log n = 2\log n$. 

Let $f: \zo^n \times \zo^n \mapping \zo^m$ be the extractor. 

Let $z$ be the most popular image for strings in the set $\{ (x,y) \mid y \in \zo^n\}$. Then
$C(z\mid x) < \log n + c$, for some constant $c$,  and $z$ has at least $2^{n-m}$ preimages in $\{ (x,y) \mid y \in \zo^n\}$. It follows that all these preimages are bad for extraction (because $m - \alpha - 4 \log n > \log n + c$, for $n$ sufficiently large). More precisely, let ${\rm BAD}_1 = \{y \in \zo^n \mid C(y) < \sigma\}$ and let ${\rm BAD}_2 = \{y \in \zo^n \mid C(y) \geq \sigma \mbox{ and }C(y \mid x) < C(y) - \alpha \}$. Then the second projection of $f^{-1}(z)$ is included in ${\rm BAD}_1 \cup {\rm BAD}_2$, and thus
\[
|{\rm BAD}_1| +  |{\rm BAD}_2| \geq 2^{n-m}.
\]
Since $|{\rm BAD}_1| < 2^\sigma$, it follows that 
\[
|{\rm BAD}_2| \geq 2^{n-m} - 2^\sigma \geq \frac{1}{n^{6}} 2^{n-\alpha} - 2^\sigma.
\]

\end{proof}

{\bf Remark:}  With the new extraction theorem, I do not obtain much better.
I obtain $cn^6$ instead of $n^{13}$, but this needs $\sigma = ((2k+1)/k) \alpha + O(\log n)$, so $\sigma > 2 \alpha + O(\log n)$. Also $n^{13}$ can probably be improved to something closer to $cn^6$.
\smallskip

Let $A_{x,\alpha} = \{u \in \zo^n \mid C(u \mid x) \leq C(u) - \alpha \}$.

For $u \in \zo^n$, we define $D_\alpha(u) = \{x \mid u \in A_{x,\alpha}\} = \{x \mid C(u) - C(u \mid x) \geq \alpha \}$
and $d_{\alpha}(u) = |D_{\alpha}(u)|$. Let $d_\alpha (u)$ be the number of sets $A_{x,\alpha}$ such that $u \in A_{x, \alpha}$.

The above theorem (with the alleged improvement) states that
\[
\frac{1}{n^6} 2^{n-\alpha} - (n^{16}) 2^{2\alpha} < |A_{x, \alpha}| <  2^{n-\alpha + c}.
\]
\begin{lemma}
If $x_1, x_2$ are at most $\beta$-dependent, then
\[
|A_{x_1, \alpha} \cap A_{x_2, \alpha}| \leq n^{20} 2^{n-2\alpha + \beta}.
\]
\end{lemma}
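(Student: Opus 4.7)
The plan is to show that whenever $u$ lies in both $A_{x_1,\alpha}$ and $A_{x_2,\alpha}$, the joint information $I(x_1 x_2 : u) = C(u) - C(u\mid x_1 x_2)$ is at least $2\alpha - \beta - O(\log n)$. Since $C(u) \leq n + O(1)$ for every $u\in\zo^n$, this forces $C(u\mid x_1 x_2) \leq n - 2\alpha + \beta + O(\log n)$, and then the standard counting bound from the preliminaries (fewer than $2^{k+1}$ $n$-bit strings have conditional complexity at most $k$ relative to a fixed conditioning string) yields
\[
|A_{x_1,\alpha}\cap A_{x_2,\alpha}| \;\leq\; 2^{\,n-2\alpha+\beta+O(\log n)},
\]
which sits inside $n^{20}\cdot 2^{\,n-2\alpha+\beta}$ as soon as the hidden multiplicative constant on $\log n$ is below $19$.

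The heart of the argument is the mutual-information lower bound. Using the Symmetry of Information theorem (parts (a)-(c) from the preliminaries), I would derive the additive chain rule
\[
I(x_1 x_2 : u) \;=\; I(x_1 : u) \;+\; \bigl[\,C(x_2\mid x_1) - C(x_2\mid x_1 u)\,\bigr] \;\pm\; O(\log n),
\]
where $I(a:b) := C(b)-C(b\mid a)$. The first summand is at least $\alpha$ directly from $u\in A_{x_1,\alpha}$. For the bracket, I would use $C(x_2\mid x_1 u) \leq C(x_2\mid u) + O(1)$ (dropping the extra conditioning on $x_1$ can only increase complexity additively). The hypothesis $\operatorname{dep}(x_1,x_2)\leq\beta$ gives $C(x_2\mid x_1) \geq C(x_2) - \beta$, while $u\in A_{x_2,\alpha}$ combined with symmetry of information yields $C(x_2\mid u) \leq C(x_2) - \alpha + O(\log n)$. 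Subtracting, the bracket is at least $\alpha - \beta - O(\log n)$, so $I(x_1 x_2 : u) \geq 2\alpha - \beta - O(\log n)$, as required.

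The main obstacle is the bookkeeping of additive logarithmic slack. Three or four applications of symmetry of information are chained (to convert $C(x_2\mid x_1)$, $C(x_2\mid u)$, and $C(u\mid x_1 x_2)$ back and forth with joint complexities), each contributing an $O(\log n)$ loss. To ensure that the accumulated error stays within the factor $n^{20}$ promised in the statement, one must use the explicit quantitative forms of parts (a)-(c) of the Symmetry of Information Theorem rather than the loose $\approx$ shorthand, and check that the resulting constant multiple of $\log n$ is at most $19$. No further technical ingredient beyond this careful accounting is needed.
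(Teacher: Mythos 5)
Your proposal is correct and follows essentially the same strategy as the paper: bound $C(u \mid x_1 x_2)$ from above by $n - 2\alpha + \beta + O(\log n)$ for every $u$ in the intersection, then apply the counting bound for strings of small conditional complexity. The difference is packaging. The paper works with $C(x_1 x_2 \mid u)$ directly: it applies symmetry of information to each hypothesis $u \in A_{x_i,\alpha}$ to obtain short programs $p_i$ for $x_i$ given $u$, concatenates them to bound $C(x_1 x_2 \mid u) \leq C(x_1) + C(x_2) - 2\alpha + O(\log n)$, invokes the dependency hypothesis as $C(x_1 x_2) \geq C(x_1) + C(x_2) - \beta - O(\log n)$, and finishes with one more application of symmetry of information. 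You fold the same sequence of symmetry-of-information steps into an additive chain rule for mutual information and bound each summand; this is a harmless reorganization, and the key inequality $C(x_1 x_2 \mid u) \leq C(x_1 \mid u) + C(x_2 \mid x_1, u) + O(\log n)$ hiding inside your chain rule is exactly the concatenation-of-programs step the paper makes explicit, so no relativized symmetry of information is actually needed. One small advantage of your decomposition is that $I(x_1 : u) \geq \alpha$ comes straight from the definition of $A_{x_1,\alpha}$ with no logarithmic loss, whereas the paper concedes $5\log n$ at the analogous step; in the end your tally of $O(\log n)$ slack, which you correctly flag as the only remaining bookkeeping, lands in the same range as the paper's $20\log n$.
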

\begin{proof} We have
\[
C(x_1 x_2) \geq C(x_1) + C(x_2) - \beta - 3\log n.
\]
Let $u \in A_{x_1, \alpha} \cap A_{x_2, \alpha}$.  So, 
\[
C(u) - C(u \mid x_i) \geq \alpha, i = 1,2.
\]
Then, by Symmetry of Information, 
\[
C(x_i) - C(x_i \mid u) \geq \alpha - 5 \log n, i=1,2.
\]
Therefore, there exists programs $p_1$ and $p_2$ of length at most $C(x_1) - \alpha + 5 \log n $, and
respectively, $C(x_2) - \alpha + 5 \log n $ such that $U(p_1,u) = x_1$ and $U(p_2, u) = x_2$. This implies that, given $u$, $x_1 x_2$ can be constructed from $p_1$ and $p_2$. Therefore,
\[
\begin{array}{ll}
C(x_1 x_2 \mid u) & \leq |p_1| + |p_2| + 2 \log p_1 + O(1) \\
& \leq C(x_1) - \alpha + C(x_2) - \alpha + 2 \log |p_1| + 10 \log n  \\
& \leq C(x_1 x_2) - (2\alpha - \beta) + 15 \log n . 
\end{array}
\]
So, $C(xy) - C(xy \mid u) \geq 2 \alpha - \beta - 15 \log n$, which, by symmetry of information, implies
$C(u) - C(u \mid xy) \geq 2 \alpha - \beta - 20 \log n$.
Therefore $C(u \mid xy) \leq C(u) - 2\alpha + \beta + 17 \log n \leq n - 2\alpha + \beta + 20 \log n$.
Thus, $A_{x_1, \alpha} \cap A_{x_2, \alpha} \subseteq \{u \in \zo^n \mid C(u \mid xy) \leq n - 2\alpha + \beta + 20 \log n\}$. The conclusion follows.~\qed
\end{proof}

\begin{lemma} 
\label{l:degree}
For every $u \in \zo^n$, with $C(u) \geq 2\alpha +16 \log n$,
\[
\frac{1}{n^{11}} 2^{n-\alpha} - n^{21} 2^{2\alpha} \leq d_\alpha (u) \leq n^5 \cdot 2^{n-\alpha + c}.
\]
\end{lemma}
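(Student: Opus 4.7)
The plan is to derive both bounds from the machinery already in place: the upper bound is a direct consequence of symmetry of information, and the lower bound is obtained by applying the previous theorem (which bounds $|A_{x,\alpha}|$ from below) to the string $u$ itself, and then translating from $A_{u,\cdot}$ into $D_{\cdot}(u)$ via the symmetry of information theorem. No new probabilistic argument is needed; the work is bookkeeping of $O(\log n)$ slack in the symmetry identities.

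For the upper bound, I would argue as follows. If $x \in D_\alpha(u)$ then $C(u) - C(u \mid x) \geq \alpha$. Part~(c) of the Symmetry of Information theorem (for equal-length strings) gives $C(x) - C(x \mid u) \geq \alpha - 5\log n$, so $C(x \mid u) \leq C(x) - \alpha + 5\log n \leq n - \alpha + 5\log n + O(1)$. The standard counting bound on the number of $n$-bit strings whose complexity conditioned on $u$ is at most $k$ is $< 2^{k+1}$, yielding $d_\alpha(u) \leq n^5 \cdot 2^{n - \alpha + c}$ for a suitable constant $c$.

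For the lower bound, I would apply the preceding theorem (on $|A_{x,\alpha}|$) with $x$ replaced by $u$: since $C(u) \geq 2\alpha + 16\log n = \sigma$, we get $|A_{u,\alpha}| \geq (1/n^6)\,2^{n-\alpha} - n^{16} 2^{2\alpha}$. Now translate: if $v \in A_{u,\alpha}$ then $C(v) - C(v \mid u) \geq \alpha$, and again by part~(c) of Symmetry of Information (with the roles of $x$ and $y$ swapped), $C(u) - C(u \mid v) \geq \alpha - 5\log n$, i.e. $v \in D_{\alpha - 5\log n}(u)$. Hence $A_{u,\alpha} \subseteq D_{\alpha - 5\log n}(u)$, so $d_{\alpha - 5\log n}(u) \geq (1/n^6)\, 2^{n-\alpha} - n^{16} 2^{2\alpha}$. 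Re-parameterising with $\alpha' = \alpha - 5\log n$ (equivalently, applying the theorem with parameter $\alpha + 5\log n$ instead of $\alpha$, which is allowed since the hypothesis on $C(u)$ still holds up to the stated constants) replaces $(1/n^6)\,2^{n-\alpha}$ by $(1/n^{11})\,2^{n-\alpha'}$ and $n^{16}\,2^{2\alpha}$ by roughly $n^{21}\,2^{2\alpha'}$ after absorbing the $2^{10\log n}=n^{10}$ factor.

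The only subtle point, and the main thing to watch, is the propagation of the $O(\log n)$ losses through symmetry of information and the matching of the $\sigma$-hypothesis of the previous theorem with the weaker hypothesis $C(u) \geq 2\alpha + 16\log n$ of the present lemma. The bound $\alpha' \geq \alpha - 5\log n$ together with the slack between $\sigma$ and $2\alpha'+16\log n$ has to be tuned so that the inequality $\sigma > 2\alpha' + 16\log n$ required by the previous theorem still holds under the lemma's hypothesis; the additive $O(\log n)$ terms in the exponents then match the stated $n^{11}$ and $n^{21}$ factors (with possibly slightly different constants in the logarithms, which are absorbed into the polynomial prefactors). Apart from this bookkeeping, the argument is a one-line consequence of the preceding theorem plus symmetry of information.
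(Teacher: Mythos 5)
Your argument is the paper's own: the upper bound is exactly the symmetry-of-information conversion to the counting bound on $\{x : C(x\mid u)\leq n-\alpha+5\log n+c\}$, and the paper's lower bound is precisely the variant you describe in your ``equivalently'' parenthetical --- it shows $A_{u,\,\alpha+5\log n}\subseteq D_\alpha(u)$ and then invokes the preceding theorem at parameter $\alpha+5\log n$. The hypothesis mismatch you flag ($C(u)\geq 2\alpha+16\log n$ vs.\ the $C(u)>2\alpha+26\log n$ that the shifted-parameter invocation strictly requires) and the prefactor discrepancy ($n^{16}\cdot 2^{2(\alpha+5\log n)}=n^{26}2^{2\alpha}$, not $n^{21}2^{2\alpha}$) are bookkeeping slips already present in the paper's own (commented-out) draft of this section, not gaps introduced by you.
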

\begin{proof}
For every $x \in A_{u, \alpha + 5 \log n}$,
\[
C(x) - C(x \mid u) \geq \alpha + 5 \log n
\]
which by symmetry of information implies
\[
C(u) - C(u \mid x) \geq \alpha + 5 \log n - 5 \log n = \alpha,
\]
and therefore, $u \in A_{x, \alpha}$. Thus
\[
d_\alpha(u) \geq |A_{u,\alpha + 5 \log n}| \geq \frac{1}{n^6} 2^{n-\alpha - 5\log n}- n^{16} 2^{2(\alpha + 5 \log n)} = \frac{1}{n^{11}} 2^{n-\alpha}- n^{21} 2^{2\alpha}.
\]
For every $u \in \zo^n$,
\[
\begin{array}{ll}
u \in A_{x, \alpha} & \Rightarrow C(u) - C(u \mid x) \geq \alpha \\

& \Rightarrow C(x) - C(x \mid u) \geq \alpha - 5 \log n \\

& \Rightarrow C(x) \mid u) \leq n - \alpha + 5 \log n + c.
\end{array}
\]
Thus, $d_\alpha(u) \leq |\{x \in \zo^n \mid C(x \mid u) \leq n - \alpha + 5 \log n + c\}| \leq n^5 \cdot 2^{n-\alpha + c}$.~\qed
\end{proof}

\begin{theorem}
For every $\alpha < (1/3)(n- 32\log n -1)$, there exists a set $B \subseteq \zo^n$ of size $\poly(n) 2^{2\alpha}$ (more precisely the size of $B$ is bounded by $2n^{12}\cdot 2^{\alpha} + n^{16}2^{2\alpha}$) such that each string in $\zo^n$ is $\alpha$-dependent with some string in $B$. 
\end{theorem}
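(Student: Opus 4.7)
The plan is to split $\zon$ into a low-complexity part $L$ that I put into $B$ directly, and a high-complexity part $H$ that I cover by a random sample exploiting the large degree guaranteed by Lemma~\ref{l:degree}. The budget $2n^{12}\cdot 2^{\alpha}$ will be spent on the random sample, and the budget $n^{16}\cdot 2^{2\alpha}$ on $L$.

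First I define $L=\{u\in\zon:C(u)<2\alpha+16\log n\}$. A standard counting bound gives $|L|\leq 2^{2\alpha+16\log n+1}=2n^{16}\cdot 2^{2\alpha}$, so throwing all of $L$ into $B$ costs only that many elements. Each $u\in L$ with $C(u)\geq\alpha$ is $\alpha$-dependent with itself and is therefore trivially covered; the (at most $2^\alpha$) strings with $C(u)<\alpha$ need a one-line fix-up which I discuss at the end.

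Next, let $H=\zon\setminus L$. Lemma~\ref{l:degree} gives $d_\alpha(u)\geq (1/n^{11})2^{n-\alpha}-n^{21}2^{2\alpha}$ for every $u\in H$. The hypothesis $\alpha<(n-32\log n-1)/3$ unpacks to $3\alpha+32\log n+1\leq n$, which is precisely what is needed to get $n^{21}2^{2\alpha}\leq (1/2)(1/n^{11})2^{n-\alpha}$ and hence $d_\alpha(u)\geq (1/(2n^{11}))2^{n-\alpha}$ for every $u\in H$. For the covering step I sample $t:=2n^{12}\cdot 2^{\alpha}$ strings $x_1,\ldots,x_t$ independently and uniformly from $\zon$. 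For any fixed $u\in H$ the probability that a single $x_i$ does not satisfy $u\in A_{x_i,\alpha}$ is at most $1-(2n^{11}\cdot 2^{\alpha})^{-1}$, so the probability that no $x_i$ covers $u$ is at most $\exp(-t/(2n^{11}\cdot 2^{\alpha}))=e^{-n}$. A union bound over $|H|\leq 2^n$ gives total failure probability $\leq 2^n\cdot e^{-n}<1$, so there exists a concrete choice $B'=\{x_1,\ldots,x_t\}$ that covers every $u\in H$. Setting $B:=L\cup B'$ gives $|B|\leq 2n^{16}\cdot 2^{2\alpha}+2n^{12}\cdot 2^{\alpha}$, matching the claim up to the stated constants.

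The only delicate point, and what I expect to be the main obstacle in turning this plan into a fully rigorous proof, is the handful of strings $u$ with $C(u)<\alpha$: no $x$ can be literally $\alpha$-dependent with such $u$, since by symmetry of information $\mathrm{dep}(u,x)\leq\min(C(u),C(x))+O(\log n)<\alpha$. I would handle this either by interpreting "$\alpha$-dependent" up to the $O(\log n)$ additive slack that the paper already adopts systematically, or by appending to $B$ a single auxiliary string that encodes the entire (small, at most $2^\alpha$) set $\{u:C(u)<\alpha\}$ and therefore carries the full $C(u)$ bits of information about each of its members. Either way this affects only additive constants and not the $\mathrm{poly}(n)\cdot 2^{2\alpha}$ asymptotic, and the remaining work is only Chernoff/union-bound bookkeeping and verification of the arithmetic used in passing from the hypothesis on $\alpha$ to the clean degree bound $d_\alpha(u)\geq (1/(2n^{11}))2^{n-\alpha}$.
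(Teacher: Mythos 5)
Your proposal is correct and follows essentially the same strategy as the paper's own proof: sample $T=2n^{12}2^{\alpha}$ uniform strings, use Lemma~\ref{l:degree} together with the hypothesis on $\alpha$ to lower-bound $d_\alpha(u)$ by $(1/(2n^{11}))2^{n-\alpha}$ for every $u$ with $C(u)\geq 2\alpha+16\log n$, apply a Chernoff/union-bound argument to cover all such $u$, and absorb the remaining $<n^{16}2^{2\alpha}$ low-complexity strings directly into $B$. Your remark about strings with $C(u)<\alpha$, which cannot literally be $\alpha$-dependent with anything, is a genuine edge case that the paper's statement glosses over, and your proposed fixes (either adopting the systematic $O(\log n)$ slack or covering those $\leq 2^{\alpha}$ strings by a single auxiliary string) are both reasonable.
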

\begin{proof}
We choose $T= 2n^{12}2^\alpha$ strings $x_1, \ldots, x_T$, uniformly at random in $\zo^n$.
The probability that a fix $u$ with $C(u) \geq 2\alpha + 16 \log n$ does not belong to any of the sets $A_{x_i, \alpha}$, for $i \in [T]$, is at most
$(1 - \frac{1}{2n^{11} 2^\alpha})^T < e^{-n}$ (we took into account that 
$(1/n^{11}) 2^{n-\alpha} - n^{21} 2^{2\alpha} \geq (1/2n^{11}) 2^{n-\alpha}$. By the union bound, the probability that there exists
$u \in \zo^n$ that does not belong to any of the sets  $A_{x_i, \alpha}$, for $i \in [T]$, is bounded by $2^n \cdot e^{-n} < 1$. Therefore there are strings $x_1, \ldots, x_T$ in $\zo^n$ such that $\bigcup A_{x_i, \alpha}$ 
contains all the strings $u \in \zon$ having $C(u) \leq 2 \alpha + 16 \log n$.
The number of strings $u$ not yet covered is at most $n^{16}2^{2\alpha}$.
~\qed

PLAN TO ESTIMATE THE SIZE of a maximal set of independent strings: start with the set $B$ from above. Let $x$ be in $B$.
The number of independent strings in $A_{x, \alpha}$ should be around $2^{n/\alpha}$, because if $z_1, \ldots, z_T$ are such strings they each describe independently about $\alpha$ bits of $x$ (?). Actually we have $x$ is in the intersection of the sets $A_{z_i, \alpha}$, and such an intersection becomes empty if $T >> 2^{n/\alpha}$. 

\end{proof}
\fi



\if01
\appendix
\section{Appendix}
\medskip

{\bf Proof of Theorem~\ref{t:impossdistributions}.}

\begin{proof}
Suppose first that $m = \alpha$. Let $a$ be the most popular string in the image of $f$. Then $|f^{-1}(a)| \geq 2^{2n-m}$. Take (arbitrarily) $B \subseteq f^{-1}(a)$ with $|B| = 2^{2n-m}$. Consider
$\mbox{LEFT-B}$ the multiset of $n$-bit prefixes of strings in $B$ and $\mbox{RIGHT-B}$ the multiset of $n$-bit suffixes of strings in $B$. The multiplicity of a string $x$ in $\mbox{LEFT-B}$ is equal to the number of strings in $B$ that have $x$ as their left half. Thus each string in $\mbox{LEFT-B}$ has multiplicity at most $2^n$. Counting multiplicities $\mbox{LEFT-B}$ has $2^{2n-m}$ elements. Therefore $\mbox{LEFT-B}$ has at least $2^{n-m}$ distinct strings. The same holds for $\mbox{RIGHT-B}$. We take $X$ to be the random variable obtained by choosing uniformly at random one element in the multiset $\mbox{LEFT-B}$ and $Y$ is the random variable obtained by choosing uniformly at random one element in the multiset $\mbox{RIGHT-B}$. By the above discussion for each $x \in \zo^n$ and $y \in \zo^n$,
\[
\begin{array}{ll}
\prob[X=x] & \leq \frac{2^n}{2^{2n-m}} = \frac{1}{2^{n-m}}, \\
\prob[Y=y] & \leq \frac{2^n}{2^{2n-m}} = \frac{1}{2^{n-m}}, \\
\prob[X=x, Y=y] & \leq \frac{1}{2^{2n-m}}.
\end{array}
\]
Thus, $X$ and $Y$ satisfy the requirements, and $\prob[f(X,Y) = a] = 1$.

Suppose now that $m > \alpha$. We define $g, h :\zo^n \times \zo^n \mapping \zo^{\alpha}$ by
$g(x,y) = $ prefix of length $\alpha$ of $f(x,y)$ and $h(x,y) = $ suffix of length $m-\alpha$ of $f(x,y)$. Let $a \in \zo^{\alpha}$ and  the random variables $X$ and $Y$ defined as in the first part of the proof (\ie, the case $m=\alpha$) but with $g$ replacing $f$. Note that $\prob[g(X,Y) = a] = 1$. Let $b$ be a string in $\zo^{m-\alpha}$ such that $h^{-1}(b)$ has at least $2^{2n}/2^{m-\alpha}$ elements. Then
\[
\begin{array}{ll}
\prob [ f(X,Y) = ab] & = \prob[g(X,Y) = a, h(X,Y) = b] \\
&= \prob[h(X,Y) = b] \\
& \geq \frac{2^{2n}/2^{m-\alpha}}{2^{2n}} = 2^{-(m-\alpha)}.

\end{array}
\]
This concludes the proof.~\qed
\end{proof}
\medskip

{\bf Proof of Lemma~\ref{l:tables}}
\medskip

\begin{proof}
We use the probabilistic method. We show that a randomly colored table fails with probability  $< 1/2$ to satisfy the proper coloring property with respects to columns (property (a) in definition~\ref{d:tables}). A similar calculation shows the similar fact about proper coloring with respects to rows (property (b) in definition~\ref{d:tables}). Therefore we can conclude that a $(S,D)$-rainbow balanced table exists.

Observe that it is enough to consider sets $B_1$ of size exactly $S$ (because a set of size $kS$ can be broken into $k$ sets of size $S$ and if each smaller set satisfies the property, then the larger set will satisfy it as well).

Therefore, let us fix $B_1$ and $B_2$ subsets of $[N]$ of size $S$, let $B_1=\{u_1 < \ldots < u_S\}$ and  $B_1=\{v_1 < \ldots < v_S\}$. We fix $(A_1, \ldots, A_S) \in (\calA_D)^S$.

Let $X_{i,j}$ be the random variable which is $1$ if the cell$(u_i, v_j)$ is properly colored with respect to columns in $B_2$  and $(A_1, \ldots A_S)$ (\ie, $T(u_i, v_j) \in A_j$), and $0$ otherwise.
Then 
\[
\prob[X_{i,j} = 1] = \frac{|A_j|}{M} = \mu_j \in [1/D, m^2/D].
\]
Let $X = \sum_{i \in B_1, j \in B_2} X_{i,j}$. Then
\[
\mu = E[X] = \sum_{j \in B_2} \sum_{i \in B_1} E[X_{i,j}] = \sum_{j \in B_2}S \cdot \mu_j \in [S^2/D, S^2 \cdot m^2/D].
\]
By the Chernoff bounds,
\[
\prob[X \geq 2 \mu] \leq e^{-(1/3) \mu} \leq e^{-(1/3) (S^2/D)}.
\]
It follows that
\[
\prob[X \geq 2 \frac{S^2 m^2}{D}] \leq \prob[ X \geq 2 \mu] \leq e^{-(1/3) (S^2/D)}.
\]
We next take the union bound over all possible choices of $(A_1, \ldots, A_S) \in (\calA_D)^S$, and all possible choices of $B_1$ and $B_2$ subsets of $[N]$ of size $S$.

For $T \in [M/D, M\cdot m^2/D]$, the number of sets in $[M]$ of size $T$ is ${M \choose T} \leq (\frac{eM}{T})^T = e^T \cdot e^{T \ln (M/T)} \leq e^{T + T \ln D}$.
So the number of subsets of $[M]$ with sizes between $M/D$ and $M \cdot m^2/D$ is at most
\[
\sum_{T=M/D}^{M\cdot m^2/D} e^{T(1+ \ln D)}.
\]
Denoting $q = e^{(1+ \ln D)}$, the above sum is
\[
\begin{array}{ll}
\sum_{T=M/D}^{M\cdot m^2/D} e^{T(1+ \ln D)} & = q^{(M/D)} + q^{(M/D) + 1} + \ldots + q^{(M/D)m^2} \\
& = q^{(M/D)} \frac{q^{(M/D)(m^2 - 1) + 1} - 1}{q-1} \\
& < q^{(M/D)} \cdot q^{(M/D)m^2} \cdot q^{-(M/D)} \cdot \frac{q}{q-1} \\
& < 2 q^{(M/D)\cdot m^2} = 2 \cdot e^{(1 + \ln D) \cdot (M/D)\cdot m^2}.

\end{array}
\]
So the number of tuples $(A_1, \ldots, A_S) \in (\calA_D)^S$ is less than
$2^S \cdot e^{S \cdot (1 + \ln D) \cdot (M/D)\cdot m^2}$.

The number of ways of choosing $B_1$ and $B_2$ is
\[
{N \choose S} \cdot {N \choose S} \leq \big( \frac{eN}{S}\big)^{2S} = e^{2S + 2S \ln (N/S)}.
\]
For the union bound to give a probability $\leq e^{-1} < 1/2$ we need
\[
(1/3)(1/D)S^2 \geq S + S(1 + \ln D)(M/D)m^2 + 2S + 2S \ln(N/S) + 1,
\]
which holds true if the parameters satisfy the hypothesis.~\qed
\end{proof}
\fi
\end{document}